\documentclass[12pt,english]{article}

\usepackage[T1]{fontenc}
\usepackage[utf8]{inputenc}
\usepackage[english]{babel}
\usepackage[margin=1.25in]{geometry}
\usepackage{amsmath,amssymb,amsthm,mathtools,bm}
\usepackage{natbib}
\usepackage{enumitem}
\usepackage{microtype}
\usepackage{setspace}
\usepackage[colorlinks=true,linkcolor=blue,urlcolor=blue,citecolor=blue]{hyperref}
\usepackage{multirow}
\usepackage{array}

\DeclareMathOperator{\Var}{Var}

\usepackage{color}
\usepackage{xcolor}
\definecolor{red}{rgb}{1,0,0}
\definecolor{gred}{rgb}{0.88,0,0}
\definecolor{redish}{rgb}{0.8,0,0}
\definecolor{green}{rgb}{0,0.45,0}
\definecolor{llgreen}{rgb}{0.9,1,0.9}
\definecolor{ggreen}{rgb}{0,0.4,0}
\definecolor{blue}{rgb}{0,0,1}
\definecolor{gblue}{rgb}{0,0,0.88}
\definecolor{lblue}{rgb}{0.5,0,1}
\definecolor{grey}{rgb}{0.6,0.6,0.6}
\definecolor{lgrey}{rgb}{0.88,0.88,0.88}
\definecolor{orange}{rgb}{1, 0.6, 0}
\definecolor{gorange}{rgb}{0.8, 0.48, 0}
\definecolor{lorange}{rgb}{1, 1, 0.5}
\definecolor{purple}{RGB}{153, 50, 204}

\newcommand{\I}{\mathcal{I}}
\newcommand{\E}{{\mathbb{E}}}

\renewcommand{\S}{{\mathcal{S}}}

\usepackage{amsthm}
\newtheorem{Theorem}{Theorem}
\newtheorem{Note}{Note}
\newtheorem{Def}{Definition}

\newtheorem{Lemma}{Lemma}
\newtheorem{Proposition}{Proposition}
\newtheorem{Example}{Example}

\newcommand{\te}{\theta}
\newcommand{\Te}{\Theta}
\newcommand{\lt}{\left}
\newcommand{\rt}{\right}
\newcommand{\h}{\widehat}

\newcommand{\td}{\tilde}
\newcommand{\cE}{\mathcal{E}}
\newcommand{\x}{\mathsf{x}}

\title{Risk-minimizing GUE-implementations of truthful reporting in a Dynamic Social Choice Problem\thanks{We are thankful for the helpful comments of Péter Ernő Frenkel.}}
\author{Máté Viczián\thanks{{\scriptsize ELTE Faculty of Science, Institute of Mathematics, Budapest, Hungary. \textit{Email}: \href{viczi.mate@gmail.com}{viczi.mate@gmail.com}.}}, Endre Csóka\thanks{{\scriptsize Alfréd Rényi Institute of Mathematics, Budapest, Hungary; ELTE Faculty of Science, Institute of Mathematics, Budapest, Hungary. Supported by the NRDI grant KKP~138270.}}}

\title{Risk-minimizing GUE-implementations of truthful reporting in a Dynamic Social Choice Problem\thanks{We are thankful for the helpful comments of Péter Ernő Frenkel.}}
\author{Endre Csóka\thanks{{\scriptsize Alfréd Rényi Institute of Mathematics, Budapest, Hungary; ELTE Faculty of Science, Institute of Mathematics, Budapest, Hungary. Supported by the NRDI grant KKP~138270.}},
Máté Viczián\thanks{{\scriptsize ELTE Faculty of Science, Institute of Mathematics, Budapest, Hungary. \textit{Email}: \href{mailto:viczi.mate@gmail.com}{viczi.mate@gmail.com}.}}}
\date{}%October 3, 2026}%\date{February 2026}

\begin{document}

\maketitle

\begin{abstract}
    The Guaranteed Utility Mechanism (GUM) was introduced to implement efficiency in Guaranteed Utility Equilibrium (GUE) in dynamic social choice problems. We show that this mechanism also has a previously unknown connection to risk minimization. Fixing an efficient decision policy, we study variance-minimizing budget-balanced transfer rules in the class of GUE-implementations of truthful reporting.

    A symmetrized version of GUM (sym-GUM) minimizes the sum of the agents' utility variances under truthful reporting when there is no public state. When a public state is present, a slight modification of this rule yields the minimizer for the same objective. Sym-GUM, however, does not minimize transfer risk. The minimizer for that problem is instead the symmetrized version of a closely related GUM-like transfer rule. In both problems, the minimizing truthful total transfers are unique up to agent-specific constants summing to zero and events of probability zero.
\end{abstract}

\section{Introduction}
Guaranteed Utility Equilibrium (GUE) is a strong equilibrium concept with several properties that dominant-strategy equilibrium does not generally possess, including collusion-proofness and robustness to repeated play \citep{GUE}. In the transferable-utility setting, a GUE allows each agent to guarantee himself an expected utility independently of the strategies of the other agents, while the sum of these guarantees equals the maximum achievable expected total utility.

As shown in \citet{CLRT}, efficiency can be implemented in GUE in a rich dynamic social choice problem with Markovian private information. This contrasts with the classical Arrow--d'Aspremont--Gérard-Varet (AGV) mechanism \citep{arrow1979,AGV} and its dynamic extension, the Balanced Team Mechanism of \citet{AtSe}. These mechanisms implement efficiency in Bayesian Nash equilibrium, but they are not generally robust to weak dominance: iterated elimination of weakly dominated strategies may eliminate all efficient outcomes. By contrast, the Guaranteed Utility Mechanism (GUM) of \citet{CLRT} implements efficiency in GUE and therefore inherits the corresponding robustness and collusion-proofness properties.

The weakness of AGV may have gone unnoticed in part because the two-agent case is fundamentally different. In the two-agent static social choice problem, AGV itself implements efficiency in GUE, and the same is true of its dynamic extension, the Balanced Team Mechanism. Consequently, in the two-agent case every Bayesian Nash equilibrium of these mechanisms is efficient and utility-equivalent. With three or more agents, this GUE property is lost in general. Thus, examples with only two agents can conceal an important distinction between AGV-type mechanisms and GUM. \citep{GUE}\bigskip

GUE implementation leaves substantial freedom in the choice of transfers, which naturally raises the question of secondary design objectives. This paper studies risk minimization as such a secondary objective. Throughout, the decision policy is fixed and maximizes expected total payoff under truthful reporting. The design variable is therefore the budget-balanced transfer rule.

Two risk criteria are considered under truthful reporting. First, we minimize the sum of the agents' utility variances\footnote{We are not talking about minimizing the variances of the \emph{per period} utilities of the agents. By the utility of an agent, we mean the sum of his per period utilities.}. The minimizer is a modification of symmetrized GUM which, in addition to symmetrizing the effects of private reports, shares equally the change in anticipated total payoff caused by the public-state realization. Second, we minimize the sum of the variances of the agents' total transfers\footnote{We are not talking about minimizing the variances of per period transfers.}. For this objective, we introduce a GUM-like transfer rule and symmetrize it over all orders of the agents. We also show that the two minimizers are almost unique in total transfers modulo constant zero-sum transfers to the agents.\bigskip

In the two-player static case, the symmetrized GUM-like transfer rule \eqref{eq:sym-GUM-like} coincides with AGV. The two minimizer transfer rules, however, do already differ in the two-player static case with no public state. In the three-player case, all three mechanisms differ. The following example demonstrates all three facts.\footnote{This introductory example uses continuous types for simplicity; the formal results below are stated for finite type spaces.}

\begin{Example}[A simple allocation problem]
    There are $N$ agents. The agents need to decide which one of them gets a specific calf they each value differently. Agent $i$'s state $\te^i$ is a number chosen uniformly from the interval $[0,1]$; this is how much he values the calf. The agents each report their valuation of the calf, and then the one who said the largest number receives the calf (this is, of course, an efficient decision policy). The payoff of agent $i$ is zero if he does not receive the calf, and is $\te^i$ otherwise. We may assume some tiebreak system is used, but as ties occur with zero probability under truth-telling, this does not affect any of the variances.

    The following shorthands will come in handy:
    \begin{align}
        M^H=\max\{\h\te^i\}_{i\in H},\qquad v^i=\begin{cases}
            \h\te^i\text{, if }$i$\text{ gets the calf}\\ 0\text{ else}
        \end{cases}.
    \end{align}

    In this example, we denote the AGV transfer to agent $i$ by $y_{AGV}^i$, the utility variance minimizer GUE-implementation of truthful reporting \eqref{modifiedsymGUM} by $y_{U}^i$ and the transfer variance minimizing one \eqref{eq:sym-GUM-like} by $y_T^i$.

    \paragraph{The $N=2$ case.} Here,
    \begin{align}
        y_{AGV}^i=y_T^i=\frac{(\h\te^j)^2-(\h\te^i)^2}{2},
    \end{align}
    where $j\neq i$ represents the other agent. On the other hand,
    \begin{align}
        y_U^i=-\frac{y_{T}^i}{2}+\frac{v^j-v^i}{2}.
    \end{align}
    We now present the sum of utility variances $V_U=\sum_{i}\Var(U^i)$ and the sum of transfer variances $V_Y=\sum_i\Var(y^i)$ under truth-telling for the three transfer rules, rounded to three decimal places.
    \begin{center}
        \begin{tabular}{c|c c}
              & $V_Y$ & $V_U$ \\ \hline
    AGV=T-min \eqref{eq:sym-GUM-like} &$0.089$&$0.100$\\
    U-min \eqref{modifiedsymGUM}    &$0.139$&$0.050$
        \end{tabular}
    \end{center}
    So the transfer variance minimizer rule does in fact give a lower transfer variance, and utility variance with the utility variance minimizer is halved at the expense of transfer volatility.

    \paragraph{The $N=3$ case.} Here, all three transfer rules differ.
    \begin{align}
        y_{AGV}^i&=\frac{(\h\te^j)^3+(\h\te^k)^3-2(\h\te^i)^3}{3},\\
        y_T^i&=\frac{y_{AGV}^i}{2}+\frac{2(M^{jk})^2-(M^{ik})^2-(M^{ij})^2}{4},\\
        y_U^i&=-\frac{y_T^i}{3}+\frac{v^j+v^k-2v^i}{3},
    \end{align}
    where $i,j$ and $k$ are the three distinct agents. The sum of utility variances $V_U$ and the sum of transfer variances $V_Y$ under truth-telling for the respective transfer rules are as follows (rounded to three decimal places).
    \begin{center}
        \begin{tabular}{c|c c}
              & $V_Y$ & $V_U$ \\ \hline
    AGV       &$0.161$&$0.145$\\
    T-min \eqref{eq:sym-GUM-like}    &$0.174$&$0.122$\\
    U-min \eqref{modifiedsymGUM}    &$0.265$&$0.032$
        \end{tabular}
    \end{center}
    At first it may seem contradictory that AGV gives a lower transfer variance than the transfer variance minimizing GUE-implementation, but this is simply explained by AGV not implementing a GUE of truth telling here. Once again, the utility variance minimizer rule gives a dramatically lower utility variance at the expense of higher transfer variance.
    
\end{Example}\bigskip

Both results are conditional on the fixed decision policy. They compare transfer rules paired with that policy; they do not optimize jointly over all efficient decision policies.

Throughout the paper, private reports within each period are simultaneous. The variance minimizing transfer rules are not claimed to be GUE-implementations when sequential within-period reporting occurs.

\section{The Setup}

There is a set $\mathcal{I}=\{1,...,N\}$ of agents. Time is discrete, and the number of periods is finite. The time periods are indexed by $t \in \{ 0,1, ... , T \}$. In each period $t$, all the agents observe the (verifiable) public state $\theta_t^0 \in \boldsymbol{\Theta^0}$, and each agent $i \in \mathcal{I}$ privately observes his type $\theta^i_t \in \boldsymbol{\Theta^i}$. We denote the state space by $ \boldsymbol{\Theta} = \prod^{N}_{i=0} \boldsymbol{\Theta^i}$. The initial state $ \theta_0 \in \boldsymbol{\Theta}$ is fixed. After the state $\theta_t \in \boldsymbol{\Theta}$ is realized, a public decision $x_t  \in \boldsymbol{X}$ is made, and each agent $i$ collects a transfer $y_t^i \in \mathbb{R}$. 
We assume that transfers are \textit{budget-balanced}, meaning that in every period $t$ the transfers sum to zero:
$\sum\limits_{i\in\mathcal I} y_t^i \equiv 0$. In this paper, the type and decision spaces are finite.

We now describe the transition probabilities and the distribution of subsequent states. The transition probability function is denoted by $\mu: \boldsymbol{\Theta} \times \boldsymbol{X}   \rightarrow \Delta (\boldsymbol{\Theta})$. That is, given any state $\theta_t \in \boldsymbol{\Theta}$ and a public decision $x_t \in \boldsymbol{X}$ at period $t$, the ($t+1$)-period state $\theta_{t+1}$ is a random variable distributed according to the probability measure $\mu(\theta_t, x_t) \in \Delta (\boldsymbol{\Theta})$. \footnote{See \citet{BVJEL19} for a discussion of Markovian transition functions.} 
In addition, we assume that \textit{types are independent}: the transition probability can be written as 
\begin{align}\label{eq:product-transition}
\mu(\theta, x) = \mu^0(\theta^0, x) \times \prod_{i=1}^{N} \mu^i(\theta^0,\theta^i, x),
\end{align}
where $\mu^0: \boldsymbol{\Theta^{0}} \times \boldsymbol{X} \rightarrow \Delta (\boldsymbol{\Theta^0})$ and $\mu^i:  \boldsymbol{\Theta^{0}} \times  \boldsymbol{\Theta^{i}} \times  \boldsymbol{X}\rightarrow \Delta (\boldsymbol{\Theta^i})$.\footnote{This Markov formulation is without loss of generality. It is well-known that any dynamic model can be described using Markov notation by defining large enough private states (see, e.g., \citet{AtSe} for a discussion). The independent-product restriction above is a substantive assumption.} Intuitively, it implies that conditional on public information, an agent's private type does not affect the distribution of the current and future types of other agents.\footnote{This, however, does not prevent one agent's reports from affecting future types of other agents, as well as the future public state through the public decisions.}

\paragraph{Utilities.} In this paper we only consider the transferable utility setting. We denote by $y^i_t \in \mathbb{R}$ a transfer to agent $i \in \mathcal{I}$ at period $t$. For given sequences of types $\theta^i =\{ \theta^i_t \}_{t=1}^{T}$, public decisions $ x =\{ x_t \}_{t=0}^{T}$, and realized transfers $y^i =\{ y^i_t \}_{t=0}^{T}$, the utility of agent $i \in \mathcal{I}$ is given by
  $U^i(\theta^i, x, y^i) = \sum_{t=0}^{T} \bigg[ u^i(\theta^i_t, x_t) + y^i_t \bigg]$, where $\te_0^i\in\Te^i$ has been previously fixed.
That is, we have an environment with \textit{private values} since the per-period payoff function $u^i(\theta^i_t, x_t)$ of each agent directly depends only on his private type and public decisions. Intuitively, this means that each agent can calculate his utility by only observing the public decision and his type, even if he is uncertain about other agents' types. Note that although the agents' utilities are quasilinear, we assume they can depend on private signals and public decisions in a general way.

\paragraph{Mechanism.}
A mechanism consists of the following.
\begin{enumerate}
    \item A pure \emph{decision policy} is a function $\mathsf{x}: \big(\{h_0\}\cup\boldsymbol{\Theta} \cup (\boldsymbol{\Theta})^2 \cup \dots\cup (\boldsymbol{\Theta})^{T}\big)\to \boldsymbol{X}$, that determines the public decision $x_t = \mathsf{x}(\h\te_1, \widehat{\theta}_2, \dots \widehat{\theta}_t)=\mathsf{x}(h_t)$ at every time period $t>0$ and also the initial decision $x_0=\x(h_0)$, where $h_0$ represents the empty history.
    A decision policy is a probability distribution over pure decision policies.   
    \item A \emph{transfer rule} $y_t^i = y_t^i(h_t)$ for every agent $i \in \mathcal{I}$, and every time period $t$, as a function of the public history.
\end{enumerate}

\paragraph{Timing.} At period $t=0$, all the agents observe the initial state $ \theta_0  = \widehat{\theta}_0 \in \boldsymbol{\Theta}$, and a decision $ x_0 \in \boldsymbol{X}$ is made.
Then, in each period $t \geqslant 1$, all the agents observe the verifiable public state $\theta_t^0 \in \boldsymbol{\Theta}^0$, and each agent $i \in \I$ privately observes his type $\theta_t^i \in \boldsymbol{\Theta}^i$.
After observing the public state and his private type, every agent $i \in \I$ makes a public announcement $\widehat{\theta}_t^i \in \boldsymbol{\Theta}^i$ of his private type. These announcements are made \emph{simultaneously}, so an agent $i\in\I$ does not know the period-$t$ report $\h\te_t^j$ of another agent $j\in\I\setminus\{i\}$ when making his period-$t$ report $\h\te_t^i$, unless $i$ and $j$ collude. Given the announced types and the public state, transfers $y_t^i$ to each agent are calculated, and a decision $x_t \in \boldsymbol{X}$ is made.

\paragraph{Strategy.} Let $\widehat{\theta}_{t} = \big( \widehat{\theta}_t^0,  \widehat{\theta}_t^1, ...,  \widehat{\theta}_{t}^{N} \big)$ denote the vector of reported types in period $t$ including the public signal $\widehat{\theta}_t^0 = \theta_t^0$.
The public history up to the period $t$ is $h_t = (\widehat{\theta}_1, \ldots, \widehat{\theta}_t)$,
and we denote by $\mathcal{H}_t$ the set of all such public histories.\footnote{We omit $(x_0,\ldots,x_t)$ from the history since, under the mechanisms we consider, public decisions are (possibly via publicly recorded randomness) functions of the recorded reports $(\widehat{\theta}_1,\ldots,\widehat{\theta}_t)$.}
Let $h_{t,\emptyset}=(\widehat{\theta}_1, \ldots, \widehat{\theta}_{t-1}, \widehat{\theta}_t^0)$ denote the partial public history after the period-$t$ public state has been observed and before the private reports in period $t$ are submitted, and let $\mathcal{H}_{t,\emptyset}$ denote the set of all such partial public histories.
A pure strategy of agent $i$ is given by $s^i = \{s^i_t\}_{t=1}^{T}$, where $s^i_t : \big( \boldsymbol{\Theta^{i}}\big)^{t} \times  \mathcal{H}_{t,\emptyset} \rightarrow  \boldsymbol{\Theta^{i}}$ maps agent $i$'s private types up to period $t$ and the partial public history before the private reports in period $t$ to a report in period $t$.
We denote the set of all mixed strategies for agent $i$ by $\S^{i}$.
We say that an agent follows a truth-telling strategy if $s^i_t \equiv \theta^i_t$ for every period $t \geqslant 1$. In period $0$, all agents are truth-telling, so $\h\te_0=\te_0$.

To accommodate various collusive schemes, we also introduce a collusive environment that allows agents to jointly coordinate their reported types. We assume that the reports are enforced by a (benevolent) coordinator who has all the information the agents in the coalition have.
Specifically to our model, a coordinated strategy of a set of agents $L \subseteq \I$ is determined by $\bar{s}^L = \{ \bar{s}^L_t \}_{t=1}^{T}$, where $\bar{s}^L_t: (\boldsymbol{\Theta}^L)^{t} \times \mathcal{H}_{t,\emptyset} \rightarrow \boldsymbol{\Theta}^L$.
A pure \emph{coordinated strategy} $\bar{s}^L$ maps the joint information of the agents in $L$ and the partial public history before the private reports in period $t$ to reports $\widehat{\theta}^L$.
We denote by $\bar{\S}^L$ the set of probability distributions on these pure strategies.

Coalition-internal side payments may also be allowed. They do not affect the utility of an outsider who follows truth-telling, nor the coalition's aggregate utility, and therefore do not alter the arguments below.

\textbf{Unless otherwise stated, proofs are written for pure strategies.} The corresponding expected-utility statements extend to mixed strategies by averaging over pure realizations when the expectations are ex ante. We handle other cases with special care. \footnote{From this point onward, the hat placed over $\te\in(\Te)^T$ to indicate that it is a report profile will be treated as a function symbol. For example, given a fixed pure strategy profile $s\in{\S}^{\I}$, we define the mapping $\te\mapsto\widehat{\te}\in(\Te)^T$ recursively as follows: for every $t>0$, $\widehat{\te}_t^0=\te_t^0$ and
$\widehat{\te}_t^i=s_t^i(\te_{\leq t}^i,h_{t,\emptyset})$
for every $i\in\I$, where $\te_0=\h\te_0$ has been previously fixed. In doing so, we implicitly assume that $s$ is deterministic.}

\subsection{Partial histories and continuation values}\label{notation}
If $n$ is a positive integer, then let $[n]=\{1,2,\dots,n\}$, while let $[0]=\emptyset$.

For $H\subseteq\I$ and $t\in[T]$, let
\begin{equation}\label{eq:partial-history}
h_{t,H}
=
\bigl(\widehat\theta_1,\ldots,\widehat\theta_{t-1},
\widehat\theta_t^0,(\widehat\theta_t^i)_{i\in H}\bigr),
\end{equation}
where $(\widehat\theta_t^i)_{i\in H}$ is an $H$-indexed family, and let $h_{0,\I}$ equal the empty history $h_0$. For an element of the state space $\theta^*\in(\Theta)^{T}$, we analogously define $h_{t,H}(\theta^*)=(\te^*_1\dots,\te^*_{t-1},{\te^*}_t^0,({\te^*}_t^i)_{i\in H})$ and $h_t(\te^*)=(\te^*_1,\dots,\te^*_t)$, and let $h_{0,\I}(\te^*)=h_0(\te^*)=h_0$. 

We denote the \textbf{anticipated distribution of states} given a report history $h_{t,H}$ and a decision policy $\mathsf{x}$ by $\mu(h_{t,H},\mathsf{x})$. $\mu(h_{t,H},\mathsf{x})$ is the continuation law obtained by fixing the coordinates recorded in $h_{t,H}$ and the initial reports $\h\te_0$, and recursively drawing all unrecorded current and future coordinates from the transition kernels in \eqref{eq:product-transition}, applying $\mathsf x$ to the resulting histories.\footnote{This definition is meaningful even for a report history that has zero probability under truthful reporting.}

Given a state history $h_{t,H}(\theta)$, a (pure) decision policy $\mathsf x$, and a (pure) grand coalition coordinated strategy $\overline s\in\overline{\S}^{\I}$, write $\mu(h_{t,H}(\theta),\mathsf x,\overline s)$ for the actual continuation law of states when the coordinates in $h_{t,H}(\te)$ and the initial state $\te_0$ are fixed, reports are generated by $\overline s$, and decisions are made according to $\x$. Under truth-telling this agrees with $\mu(h_{t,H}(\theta),\mathsf x)$.

\begin{Def}[Continuation law expectation operators] Given a period $t>0$ and set of agents $H\subseteq\I$ (or the pair $t=0$ and $H=\I$), define the history kernel
\begin{align}
    \eta_{t,H}=(\{1,\dots,t-1\}\times(\{0\}\cup\I))\cup(\{t\}\times(\{0\}\cup H)).
\end{align}
Given a (pure) coordinated strategy $\overline{s}\in\overline{\S}^{\I}$ and a (pure) decision policy $\mathsf{x}$, we define the operator $\cE_{\mu(\eta_{t,H},\mathsf{x},\overline{s})}\lt[\cdot\rt]$:
    \begin{align}
        \cE_{\mu(\eta_{t,H},\mathsf{x},\overline{s})}\lt[\cdot\rt]:((\Te)^{T}\to\mathbb{R})&\to((\Te)^{T}\to\mathbb{R}),\\
        f&\mapsto\cE_{\mu(\eta_{t,H},\mathsf{x},\overline{s})}\lt[f\rt],\\
        \cE_{\mu(\eta_{t,H},\mathsf{x},\overline{s})}\lt[f\rt](\te^*)&\coloneq\E_{\mu(h_{t,H}(\te^*),\mathsf{x},\overline{s})}[f(\td\te)].
    \end{align}
    When $\overline{s}$ is the profile of truth-telling strategies, we suppress it from the notation.
\end{Def}
$\mu(\eta_{t,H},\mathsf{x},\overline{s})$ is not a probability measure itself, just a probability kernel.
\begin{Lemma}[A kind of linearity; the law of total expectation]\label{lem:LOTE}
$f$ and $g$ denote two $(\Te)^{T}\to\mathbb{R}$ functions, and the decision policy $\x$ and coordinated strategy $\overline{s}\in\overline{S}^{\I}$ are pure.
    \begin{enumerate}[label=(\roman*)]
        \item If $g(\te^*)=g(\te^{**})$ whenever $h_{t,H}(\te^*)=h_{t,H}(\te^{**})$, then $\cE_{\mu(\eta_{t,H},\mathsf{x},\overline{s})}\lt[fg\rt]=\cE_{\mu(\eta_{t,H},\mathsf{x},\overline{s})}\lt[f\rt]g$.
        \item $\cE_{\mu(\eta_{t_1,H_1},\mathsf{x},\overline{s})}\lt[\cE_{\mu(\eta_{t_2,H_2},\mathsf{x},\overline{s})}\lt[f\rt]\rt]=\cE_{\mu(\eta_{t_1,H_1}\cap\eta_{t_2,H_2},\mathsf{x},\overline{s})}\lt[f\rt]$.
    \end{enumerate}
\end{Lemma}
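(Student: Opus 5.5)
The plan is to reduce both parts to the measure-theoretic structure of the continuation kernels. Everything is finite: the type space $\Te$, the horizon $T$, and the pure $\x$ and $\overline{s}$. So for each $\te^*$ the law $\mu(h_{t,H}(\te^*),\x,\overline{s})$ is a finite probability measure on $(\Te)^T$. By construction it is supported on $\{\td\te: h_{t,H}(\td\te)=h_{t,H}(\te^*)\}$, because the coordinates in $\eta_{t,H}$ are held fixed and only the remaining coordinates are drawn. Moreover, it depends on $\te^*$ only through $h_{t,H}(\te^*)$.

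\textbf{Part (i).} Evaluate both sides at $\te^*$. On the support of $\mu(h_{t,H}(\te^*),\x,\overline{s})$ we have $h_{t,H}(\td\te)=h_{t,H}(\te^*)$, so the hypothesis gives $g(\td\te)=g(\te^*)$. Hence $g$ factors out as a constant:
\[\E_{\mu(h_{t,H}(\te^*),\x,\overline{s})}[f(\td\te)g(\td\te)]=g(\te^*)\,\E_{\mu(h_{t,H}(\te^*),\x,\overline{s})}[f(\td\te)].\]
This is pure bookkeeping.

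\textbf{Part (ii).} First note that the index sets are nested. Each $\eta_{t,H}$ is an initial segment of coordinates, ordered by period and, within period $t$, by recorded agents. Therefore for any two of them either $\eta_{t_1,H_1}\subseteq\eta_{t_2,H_2}$ or the reverse holds, except when $t_1=t_2$ with $H_1,H_2$ incomparable. In every case the intersection is again of the form $\eta_{t,H}$, with $H=H_1\cap H_2$ in the equal-period case.

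The core step is a tower/consistency property. Write $\nu_{\eta}(\te^*)$ for the law $\mu(h_\eta(\te^*),\x,\overline s)$. I would show that if $\td\te\sim\nu_{\eta\cap\eta'}(\te^*)$, then the conditional law of $\td\te$ given its coordinates in $\eta$ is $\nu_\eta(\td\te)$. Granting this, the law of total expectation gives
\[\E_{\nu_{\eta\cap\eta'}(\te^*)}\lt[\E_{\nu_{\eta}(\td\te)}[f]\rt]=\E_{\nu_{\eta\cap\eta'}(\te^*)}[f].\]
The left-hand side here is not yet the statement's left-hand side, which is $\cE_{\eta_1}[\cE_{\eta_2}f]$. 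To bridge the two, use (i): the inner function $\cE_{\eta_2}f$ depends only on $h_{\eta_2}$. I would split into cases.

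\textbf{Case $\eta_2\subseteq\eta_1$.} Then $\cE_{\eta_2}f$ is $h_{\eta_1}$-measurable. By (i) with $f\equiv1$, the outer operator leaves it unchanged, and the right-hand side is $\cE_{\eta_2}f$, as required.

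\textbf{Case $\eta_1\subseteq\eta_2$.} This is exactly the tower property with $\eta=\eta_2$, giving $\cE_{\eta_1}f$.

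\textbf{Incomparable case ($t_1=t_2=t$).} Here I would invoke the independence assumption \eqref{eq:product-transition}. Within period $t$, the private coordinates $\te^i_t$ are drawn independently given $\te_{t-1}$, $x_{t-1}$ and $\te_t^0$, and future coordinates are drawn only after all of period $t$ is fixed. So integrating out $H_2\setminus H_1$ and then $H_1\setminus H_2$ agrees with the product law given $H_1\cap H_2$.

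The main obstacle is verifying the consistency property. The point is that the kernel is built by sequential sampling: coordinates not recorded are drawn from the transition kernels, reports are produced by $\overline s$, and decisions by $\x$. Conditioning such a sequentially generated law on a further initial segment of coordinates returns the same construction started from the longer history. I would prove this by induction on the number of coordinates in $\eta_2\setminus\eta_1$, adding one period-agent coordinate at a time. Each step uses only Bayes' rule and the fact that the kernel at each stage depends on the past solely through the recorded history. Zero-probability histories are harmless, since the definition is pointwise and the identity only needs to hold on the support of the outer measure.
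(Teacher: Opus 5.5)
Your proposal is correct and follows essentially the same route as the paper. Part (i) is argued identically: $g$ is constant on the support of the continuation law. Part (ii) uses the same three cases: $\eta_{t_2,H_2}\subseteq\eta_{t_1,H_1}$ via (i) with $f\equiv1$, $\eta_{t_1,H_1}\subseteq\eta_{t_2,H_2}$ via the tower/double-sum argument, and the incomparable same-period case via the product form \eqref{eq:product-transition}. The main difference is presentational. You isolate a general consistency property of the kernels and prove it by induction one coordinate at a time; note that the within-period steps of that induction also tacitly use the product structure, not just sequential sampling. In the incomparable case, the paper first shows that the outer operator can be replaced by $\cE_{\mu(\eta_{t,H_1\cap H_2})}$ and then reduces to the nested case, which is the same idea as your Fubini step.
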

\begin{proof}
    See appendix \ref{app:LOTE}.
\end{proof}

For agent $j$, define the anticipated total payoff
\begin{equation}\label{eq:upsilon}
\Upsilon_{t,H,\mathsf x}^j
=
\cE_{\mu(\eta_{t,H},\mathsf x)}
\left[
\sum_{\tau=0}^T
u^j\bigl(\widetilde\theta_\tau^j,
\mathsf x(h_\tau(\widetilde\theta))\bigr)
\right],
\end{equation}
For $K\subseteq\I$, let
\[
\Upsilon_{t,H,\mathsf{x}}^K=\sum_{j\in K}\Upsilon_{t,H,\mathsf{x}}^j,
\qquad
\Upsilon_{t,H,\mathsf{x}}^{\emptyset}=0.
\]

\begin{Note}\label{note:total payoff}
    Given truth-telling of all agents, the total payoff of agent $i$ is $\Upsilon_{T,\I,\mathsf{x}}^i(\te)$, where $\te\in(\Te)^{T}$ is the state profile. More generally, if agent $i$ is truth-telling, then his total payoff is $\Upsilon_{T,\I,\mathsf{x}}^i(\h\te)$, where $\h\te\in(\Te)^{T}$ is the report profile.
\end{Note}

For each agent, let
\[
Y^i=\sum_{t=0}^T y_t^i
\]
denote the total transfer.

For a (pure) grand coalition coordinated strategy $\overline s\in\overline{S}^{\I}$, a (pure) decision policy $\mathsf{x}$, a transfer rule $y^i$ for agent $i$ and a state profile $\te\in(\Te)^{T}$, we denote the utility of agent $i$ given by these parameters by $U^i(\mathsf{x},\overline s,y^i)(\te)=\sum_{t=0}^T\lt[u^i(\te_t^i,\mathsf{x}(h_t(\h\te)))+y_t^i(\h\te)\rt]$, where the initial state $\te_0$ has already been fixed.\footnote{The transfer $y_t^i$ to agent $i$ in period $t$ is a function of the public history $h_t=h_t(\h\te)$ up to and including period $t$. Nevertheless, we will write it as function of the full report profile $\h\te$ as this is algebraically more convenient, but the fact that this notational convention is used does not mean that future reports can influence current transfers.} Here, $U^i(\mathsf{x}, \overline s,y^i)$ itself is a $(\Te)^{T}\to\mathbb{R}$ function. If $\overline s$ is the profile of truth-telling strategies, we suppress it from the notation, writing just $U^i(\mathsf{x},y^i)$.

\section{Fixing the decision policy, defining GUE-implementation of truthful reporting}

\paragraph{The decision policy.}In this paper, we will consider the decision policy to be fixed. We fix the decision policy such that it maximizes expected total payoff under truthful reporting.
\begin{equation}\label{thedecisionpolicy}
    \mathsf{x}\in \underset{\mathsf{x}'}{\arg\max}\left(\E_{\mu(h_{0,\I},\mathsf{x}')}\left[\sum_{i\in\I}\sum_{t=0}^Tu^i(\td\te_t^i,\mathsf{x}'(h_t(\td\te)))\right]\right).
\end{equation}
Since all relevant sets are finite, an optimal decision policy exists.

We fix the decision policy to be pure. We weakly justify this choice by showing that an efficient pure decision policy always exists. As stated earlier, a (mixed) decision policy is a probability distribution over pure decision policies, and its expected total payoff is the corresponding convex combination of the expected total payoffs generated by those pure policies. Consequently, if a randomized decision policy is optimal, then every pure decision policy in its support is also optimal. So a deterministic efficient decision policy $\mathsf{x}$ exists.

\begin{Lemma}[Truthful reporting maximizes total utility]\label{utilitysum}
    When the decision policy is fixed this way \eqref{thedecisionpolicy}, then for any coordinated strategy $\bar s\in\bar{\S}^\I$,
    \begin{align}
        \sum_{i\in\I}\E_{\mu(h_{0,\I},\mathsf{x},\bar s)}\lt[U^i(\mathsf{x},\overline{s},y^i)\rt]\leq\sum_{i\in\I}\E_{\mu(h_{0,\I},\mathsf{x})}\lt[U^i(\mathsf{x},y^i)\rt]
    \end{align}
    for any budget-balanced transfer rule, where the transfer to agent $i$ at period $t$ is denoted by $y_t^i(h_t)=y_t^i(\h\te)$, and $y^i=\{y_t^i\}_{t=0}^T$.
\end{Lemma}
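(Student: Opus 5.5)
Budget balance makes transfers drop out of the total, which reduces the claim to a comparison of total payoffs only. Summing $U^i(\mathsf{x},\overline s,y^i)$ over $i\in\I$ gives
\[
\sum_{t=0}^T\sum_{i\in\I}u^i\bigl(\te_t^i,\mathsf{x}(h_t(\h\te))\bigr)+\sum_{t=0}^T\sum_{i\in\I}y_t^i(\h\te),
\]
and the second term vanishes pointwise because $\sum_i y_t^i\equiv 0$ for every history, including off-path ones. The same holds under truth-telling. Both sides are therefore the expected total payoff, without transfers, under the respective strategy profiles. It suffices to show that the expected total payoff generated by $\overline s$ together with $\mathsf{x}$ is at most the optimum in \eqref{thedecisionpolicy}.

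For this I would construct a (possibly randomized) decision policy $\mathsf{x}'$ that, under truthful reporting, reproduces exactly the joint law of (states, decisions) induced by $\overline s$ and $\mathsf{x}$. It suffices to treat a pure $\overline s$, since mixed coordinated strategies are averages. The natural choice is $\mathsf{x}'(h_t(\te)) := \mathsf{x}(h_t(\h\te))$, where $\h\te_{\le t}$ is computed recursively from the true history $\te_{\le t}$ via $\overline s$. This is legitimate because $\overline s_t$ uses only $(\te^{\I}_{\le t})$ and the partial public history, and the latter is itself a function of $\te_{\le t}$ by recursion; the public coordinates are reported truthfully. Hence $\mathsf{x}'$ is a well-defined function of truthful histories.

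Next I would verify that the state process has the same law under both arrangements. The transition kernel $\mu(\te_t,x_t)$ depends only on the true state and the actual decision. By induction on $t$, the joint distribution of $(\te_{\le t},x_{\le t})$ coincides in the two settings: the decisions agree pointwise given the same state path, so the next-state laws agree. The per-period payoffs $u^i(\te_t^i,x_t)$ depend only on the true state and decision, so expected total payoffs coincide. Then \eqref{thedecisionpolicy} gives that the total under $\mathsf{x}'$ is at most the total under $\mathsf{x}$, which is the truthful total, and this proves the lemma.

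The main obstacle is the bookkeeping in this simulation step. One must check that the anticipated-law formalism $\mu(h_{0,\I},\mathsf{x},\overline s)$ is precisely the law of the process generated by the true kernel with decisions $\mathsf{x}(h_t(\h\te))$. One must also handle mixed coordinated strategies, where $\mathsf{x}'$ becomes a mixture of pure policies. That case is admissible because \eqref{thedecisionpolicy} ranges over decision policies, which are distributions over pure ones, so their values are convex combinations bounded by the maximum. Finally, the period-$0$ decision is common to both settings, since $\h\te_0=\te_0$.
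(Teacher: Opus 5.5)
Your proposal is correct and takes essentially the same route as the paper. Budget balance removes the transfers, the simulating policy $\mathsf{x}'(h_t(\theta))=\mathsf{x}(h_t(\widehat\theta))$ is well defined because $\overline s$ is non-anticipative and it reproduces $\mu(h_{0,\I},\mathsf{x},\overline s)$ under truthful reporting, and optimality of $\mathsf{x}$ in \eqref{thedecisionpolicy} then gives the inequality. Your added remarks on mixed coordinated strategies and on the common period-$0$ decision fit the paper's convention of writing proofs for pure strategies and averaging.
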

\begin{proof}See Appendix \ref{app:utilitysum}.\end{proof}

\paragraph{}\label{simplifiednotation}From now on, $\mathsf x$ is fixed and suppressed from the notation. Thus $\mu(h_{t,H})$ means $\mu(h_{t,H},\mathsf x)$, $\Upsilon_{t,H}^j$ means $\Upsilon_{t,H,\mathsf{x}}^j$, et cetera.

\paragraph{GUE-implementations.}In this paper, we will refer to transfer rules as GUE-implementations of truthful reporting, meaning that when the fixed efficient decision policy is paired with the transfer rule, the resulting mechanism is a GUE-implementation of truthful reporting. In light of this, the definition of a GUE-implementation of truthful reporting takes the following form.

\begin{Def}[GUE-implementation of truthful reporting]\label{def:GUE-transfer}
   The (budget-balanced) transfer rule $y=\{y^i\}_{i\in\I}$, $y^i=\{y_t^i\}_{t=0}^T$, $y_t^i(h_t)=y_t^i(\h\te)$ is a GUE-implementation of truthful reporting, if for every agent $i\in\I$ and coordinated strategy of the other agents $\bar s^{-i}\in\bar\S^{-i}$,
   \begin{align}\label{GUE property}
   \E_{\mu(h_{0,\I},\bar s^{-i},s_*^i)}\lt[U^i(\overline{s}^{-i},s_*^i,y^i)\rt]\geq\E_{\mu(h_{0,\I})}\lt[U^i(y^i)\rt],
   \end{align}
   where $s_*$ is the profile of truth-telling strategies.
\end{Def}

\begin{Proposition}[Equivalence with GUE]\label{prop:equivalence with GUE}
    With the pure decision policy $\mathsf{x}$ fixed according to \eqref{thedecisionpolicy}, a budget-balanced transfer rule $y$ satisfies \eqref{GUE property} if and only if the mechanism $(y,\mathsf{x})$ is a GUE-implementation of truthful reporting.
\end{Proposition}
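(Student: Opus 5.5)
The proof is a direct unpacking of the definition of GUE from \citet{GUE}, specialized to the truth-telling profile, using Lemma \ref{utilitysum} to identify the benchmark total utility. Recall that a strategy profile $s$ is a GUE (in the transferable-utility setting) with guarantees $(g^i)_{i\in\I}$ if two conditions hold. First, for every $i$, playing $s^i$ secures agent $i$ expected utility at least $g^i$ against every coordinated strategy $\bar s^{-i}\in\bar\S^{-i}$ of the others. Second, $\sum_i g^i$ equals the maximum expected total utility achievable over all (coordinated) strategy profiles. The mechanism $(y,\mathsf x)$ GUE-implements truthful reporting if the truth-telling profile $s_*$ is such a GUE. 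I will prove the two directions separately, working with pure $\bar s^{-i}$ and extending to mixed ones by averaging, since the left-hand side of \eqref{GUE property} is linear in the mixture.

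\emph{(\eqref{GUE property} $\Rightarrow$ GUE.)} Set $g^i:=\E_{\mu(h_{0,\I})}[U^i(y^i)]$. Then \eqref{GUE property} says exactly that $s_*^i$ guarantees $g^i$ against every $\bar s^{-i}$. For the sum condition, $\sum_i g^i$ is the expected total utility under truthful reporting. By Lemma \ref{utilitysum}, this value is at least the expected total utility under any pure grand-coalition coordinated strategy $\bar s\in\bar\S^{\I}$, for any budget-balanced $y$. By convexity (averaging over pure realizations), the same bound holds against mixed coordinated strategies, and hence also against profiles of independent individual strategies. Truthful reporting itself attains this value, so $\sum_i g^i$ is the maximum achievable total expected utility, and $s_*$ is a GUE.

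\emph{(GUE $\Rightarrow$ \eqref{GUE property}.)} Suppose $s_*$ is a GUE with guarantees $g^i$. Taking $\bar s^{-i}$ to be truthful in the guarantee condition gives $\E_{\mu(h_{0,\I})}[U^i(y^i)]\ge g^i$ for every $i$. Summing over $i$, the truthful total is at least $\sum_i g^i$, which by the GUE definition equals the maximal achievable total. The truthful total is also at most that maximum, so all the individual inequalities are equalities: $g^i=\E_{\mu(h_{0,\I})}[U^i(y^i)]$. Substituting this into the guarantee condition yields \eqref{GUE property}.

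The only delicate step is making sure that the quantity ``maximum achievable expected total utility'' in the GUE definition coincides with the truthful total. This requires checking three things: that the relevant class of deviations, including mixed and coordinated ones, is covered by Lemma \ref{utilitysum} via averaging; that budget balance makes transfers cancel in the sum; and that the maximum in the definition is taken over the same strategy spaces $\bar\S$ used here. Once these definitional alignments are in place, both directions are a few lines each.
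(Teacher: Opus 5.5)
Your argument is correct and is essentially the paper's. The paper cites the fifth transferable-utility characterization of GUE in \citet[Lemma~2.3]{GUE}: truth-telling guarantees each agent his truthful expected utility against every coordinated deviation of the others, and no grand-coalition coordinated strategy yields a higher expected total. It then discharges the second condition with Lemma \ref{utilitysum}. You use Lemma \ref{utilitysum} in exactly that role; the only difference is that you derive the characterization yourself from the guarantee-sum definition rather than citing it, via the squeeze $\sum_i g^i \le \sum_i \E_{\mu(h_{0,\I})}[U^i(y^i)] \le \max = \sum_i g^i$, which forces $g^i=\E_{\mu(h_{0,\I})}[U^i(y^i)]$.
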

\begin{proof}

By the fifth transferable-utility GUE characterization of \citet[Lemma~2.3]{GUE}, all we have to show is that

\begin{gather}
    \forall\overline{s}^{-i}\in\overline{\S}^{-i}:\E_{\mu(h_{0,\I},\overline{s}^{-i},s_*^i)}\lt[U^i(\overline{s}^{-i},s_*^i,y^i)\rt]\geq\E_{\mu(h_{0,\I})}\lt[U^i(y^i)\rt]\\ \Updownarrow\\
    \begin{cases}
        \forall\overline{s}^{-i}\in\overline{\S}^{-i}:\E_{\mu(h_{0,\I},\overline{s}^{-i},s_*^i)}\lt[U^i(\overline{s}^{-i},s_*^i,y^i)\rt]\geq\E_{\mu(h_{0,\I})}\lt[U^i(y^i)\rt],\\
        \forall\overline{s}\in\overline{\S}^\I:\sum_{i\in\I}\E_{\mu(h_{0,\I},\overline{s})}\lt[U^i(\overline{s},y^i)\rt]\leq\sum_{i\in\I}\E_{\mu(h_{0,\I})}\lt[U^i(y^i)\rt]
    \end{cases}.
\end{gather}
Taking into account Lemma \ref{utilitysum}, the proof is complete.
\end{proof}

\section{Conditional characterization of GUE-implementation of truthful reporting}

In this section, we describe a sufficient and a necessary condition for GUE-implementation of truthful reporting, both of which use the same conditional equalities to characterize GUE. Together, the two conditions do not generally give an equivalent characterization of GUE-implementation of truthful reporting, but they do, for instance, when every state profile occurs with positive probability under truth-telling.

\subsection{Necessary condition}

\begin{Lemma}[Conditional guarantee]\label{Proposition 1}
    If the transfer rule $y$ is a GUE-implementation of truthful reporting, then for every agent $i\in\I$ and period $t\in[T]$,
    \begin{align}\label{eq:conditional guarantee}
        \cE_{\mu(\eta_{t,\emptyset},\overline{s}^{-i},s_*^i)}\lt[U^i(\overline{s}^{-i},s_*^i,y^i)\rt]\geq\cE_{\mu(\eta_{t,\emptyset})}\lt[U^i(y^i)\rt]\qquad\mu(h_{0,\I})\text{-a.s.}
    \end{align}
    if the pure coordinated strategy $\overline{s}^{-i}$ of the other agents is truth-telling up to and including period $t-1$, where $s_*$ is the profile of truth-telling strategies.
\end{Lemma}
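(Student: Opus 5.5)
We argue by contradiction, building a deviation of the other agents that exploits any conditional shortfall in agent $i$'s guarantee and thereby violates the unconditional GUE property \eqref{GUE property}. Fix $i$, $t\in[T]$ and a pure coordinated strategy $\overline{s}^{-i}$ of the other agents that is truth-telling up to and including period $t-1$. Define the finite set of partial histories
\[
B=\Bigl\{h_{t,\emptyset}(\te^*)\,:\,\mu(h_{0,\I})\text{-probability of }h_{t,\emptyset}(\te^*)\text{ is positive and }\cE_{\mu(\eta_{t,\emptyset},\overline{s}^{-i},s_*^i)}\lt[U^i(\overline{s}^{-i},s_*^i,y^i)\rt](\te^*)<\cE_{\mu(\eta_{t,\emptyset})}\lt[U^i(y^i)\rt](\te^*)\Bigr\}.
\]
Both sides depend on $\te^*$ only through $h_{t,\emptyset}(\te^*)$. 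This holds because, under a strategy profile that is truthful before period $t$, the report history up to $h_{t,\emptyset}$ coincides with the state history. Hence $B$ is well defined. It suffices to show that $B$ is empty, since then \eqref{eq:conditional guarantee} holds off a null set.

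Assume $B\neq\emptyset$ and define a new pure coordinated strategy $\td s^{-i}$ of the others. It is truthful through period $t-1$. From period $t$ on, it plays $\overline{s}^{-i}$ if the partial public history $h_{t,\emptyset}$ lies in $B$, and otherwise it tells the truth forever. This is a legitimate strategy, because $h_{t,\emptyset}$ is public at the start of period $t$ and $\overline{s}^{-i}_\tau$ for $\tau\ge t$ is a function of information the coalition holds.

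Next compare $i$'s expected utility under $(\td s^{-i},s_*^i)$ with truthful play using the law of total expectation, Lemma~\ref{lem:LOTE}(ii), with $\eta_{0,\I}\cap\eta_{t,\emptyset}=\eta_{0,\I}$. Before period $t$ both profiles are truthful, so the laws of $h_{t,\emptyset}$ coincide. Then
\[
\E_{\mu(h_{0,\I},\td s^{-i},s_*^i)}\lt[U^i\rt]-\E_{\mu(h_{0,\I})}\lt[U^i(y^i)\rt]
=\E_{\mu(h_{0,\I})}\Bigl[\mathbf 1_{B}\bigl(\cE_{\mu(\eta_{t,\emptyset},\overline{s}^{-i},s_*^i)}[U^i]-\cE_{\mu(\eta_{t,\emptyset})}[U^i(y^i)]\bigr)\Bigr].
\]
We use Lemma~\ref{lem:LOTE}(i) to pull out the indicator $\mathbf 1_B$, which is $h_{t,\emptyset}$-measurable. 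Off $B$, the continuation under $\td s^{-i}$ is fully truthful, so the conditional expectations agree there. On $B$, the continuation under $\td s^{-i}$ is that of $\overline{s}^{-i}$.

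The right-hand side is a finite sum of strictly negative terms with positive weights, so it is negative. This contradicts \eqref{GUE property}.

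The main obstacle is the bookkeeping in the middle step. We must check that the conditional continuation law $\mu(h_{t,\emptyset}(\te),\x,\cdot)$ under $\td s^{-i}$ equals that under $\overline{s}^{-i}$ on $B$ and the truthful one off $B$. We must also check that the ex ante law under the deviation disintegrates through these kernels exactly as in Lemma~\ref{lem:LOTE}(ii). The latter requires applying the lemma to the coordinated profile rather than truth-telling, and noting that truthfulness before $t$ makes report and state histories coincide. Finiteness of the type spaces makes "a.s." simply mean "on histories of positive probability," which avoids measurability issues.
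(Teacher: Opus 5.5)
Your proposal is correct and follows essentially the same route as the paper: both restrict the deviation $\overline{s}^{-i}$ to the partial histories $h_{t,\emptyset}$ where the conditional inequality fails, with truth-telling elsewhere. Both then decompose the two ex ante expectations through $h_{t,\emptyset}$, using that play is truthful before period $t$, and invoke the GUE inequality; the only cosmetic difference is that you argue by contradiction over positive-probability histories, whereas the paper directly shows the failure set $\Psi$ has $\mu(h_{0,\I})$-probability zero.
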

\begin{proof}
    Let $\Psi\subseteq(\Te)^{T}$ denote the set of state paths where the inequality \eqref{eq:conditional guarantee} fails. Our goal is to show that this is a zero probability event under $\mu(h_{0,\I})$. Note that whether a state profile $\te\in(\Te)^{T}$ falls within $\Psi$ is already determined by the partial history $h_{t,\emptyset}(\te)$, as both sides of the inequality \eqref{eq:conditional guarantee} only depend on this partial history. It is then very natural to introduce the set of partial histories $\Psi_{t,\emptyset}=\{h_{t,\emptyset}(\te)\mid\te\in\Psi\}$ where the inequality \eqref{eq:conditional guarantee} fails. $\bar s\coloneqq(\bar s^{-i},s_*^i)$.

    We construct the coordinated strategy $\overline{s}'\in\overline{\S}^\I$ where agents report according to $\overline{s}$, except when they perceive a reported partial history $h_{t,\emptyset}(\te)\notin\Psi_{t,\emptyset}$, in which case they continue truth-telling. Because $y$ is a GUE implementation of truthful reporting,
    \begin{align}\label{eq:prior inequality}
        \E_{\mu(h_{0,\I},\overline s')}\lt[U^i(\overline{s}',y^i)\rt]\geq\E_{\mu(h_{0,\I})}\lt[U^i(y^i)\rt].
    \end{align}
    By the tower property, the right side of \eqref{eq:prior inequality} can be written as
    \begin{align}\label{right1}
        \E_{\mu(h_{0,\I})}\lt[U^i(y^i)\rt]=\E_{\mu(h_{0,\I})}\lt[\cE_{\mu(\eta_{t,\emptyset})}\lt[U^i(y^i)\rt]\rt],
    \end{align}
    and the left
    \begin{align}\label{right2}        &\E_{\mu(h_{0,\I},\overline s')}\lt[U^i(\overline{s}',y^i)\rt]=\E_{\mu(h_{0,\I},\overline s')}\lt[\cE_{\mu(\eta_{t,\emptyset},\overline s')}\lt[U^i(\overline{s}',y^i)\rt]\rt]\\&=\E_{\mu(h_{0,\I})}\lt[\cE_{\mu(\eta_{t,\emptyset},\overline s')}\lt[U^i(\overline{s}',y^i)\rt]\rt],
    \end{align}
    additionally considering that $\overline s'$ is truth-telling before period $t$. Because $\overline{s}'$ is truth-telling conditional on a reported partial history outside of $\Psi_{t,\emptyset}$ having occurred, $\cE_{\mu(\eta_{t,\emptyset},\overline{s}')}\lt[U^i(\overline{s}',y^i)\rt](\te^*)=\cE_{\mu(\eta_{t,\emptyset})}[U^i(y^i)](\te^*)$ when $\te^*\notin\Psi$. Rearranging \eqref{eq:prior inequality} gives
    \begin{align}\label{eq:rearranged prior inequality}
        \E_{\mu(h_{0,\I})}\lt[\cE_{\mu(\eta_{t,\emptyset})}\lt[U^i(y^i)\rt]-\cE_{\mu(\eta_{t,\emptyset},\overline s')}\lt[U^i(\overline{s}',y^i)\rt]\rt]\leq 0.
    \end{align}
    Here, the integrand is zero on state paths outside of $\Psi$ and positive on state paths within $\Psi$; all the while the integral is non-positive. This implies that $\Psi$ has zero probability under $\mu(h_{0,\I})$.
\end{proof}

This lemma naturally extends to mixed strategies by taking the corresponding convex combination of inequalities of the form \eqref{eq:conditional guarantee}.

\begin{Proposition}[Necessary on-path condition]\label{cor1}
If the transfer rule $y$ is a GUE-implementation of truthful reporting, then for every period $t\in[T]$ and agent $i\in\I$
\begin{equation}
    \cE_{\mu(\eta_{t,\I\setminus\{i\}})}\lt[U^i(y^i)\rt]=\cE_{\mu(\eta_{t,\emptyset})}\lt[U^i(y^i)\rt]\qquad\mu(h_{0,\I})\text{-a.s.}
\end{equation}
\end{Proposition}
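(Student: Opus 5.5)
We prove the equality as two opposite inequalities. The direction ``$\ge$'' comes from Lemma~\ref{Proposition 1} through a suitably chosen deviation of the other agents. The direction ``$\le$'' comes from the tower property.

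\textbf{Step 1 (tower identity).} Since $\eta_{t,\emptyset}\subseteq\eta_{t,\I\setminus\{i\}}$, Lemma~\ref{lem:LOTE}(ii) gives
\begin{align}
\cE_{\mu(\eta_{t,\emptyset})}\lt[\cE_{\mu(\eta_{t,\I\setminus\{i\}})}\lt[U^i(y^i)\rt]\rt]=\cE_{\mu(\eta_{t,\emptyset})}\lt[U^i(y^i)\rt].
\end{align}
So, conditionally on $h_{t,\emptyset}$, the random variable $Z\coloneqq\cE_{\mu(\eta_{t,\I\setminus\{i\}})}[U^i(y^i)]-\cE_{\mu(\eta_{t,\emptyset})}[U^i(y^i)]$ has conditional mean zero. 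By Lemma~\ref{lem:LOTE}(i), the second term depends only on $h_{t,\emptyset}$. Hence it suffices to show $Z\ge0$ $\mu(h_{0,\I})$-a.s. A nonnegative variable with zero conditional mean vanishes almost surely.

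\textbf{Step 2 (deviation showing $Z\ge0$).} Fix $\te^*$ with $h_{t,\emptyset}(\te^*)$ of positive probability, and fix a candidate period-$t$ report profile $r=(r^j)_{j\ne i}$ of the others. Define the pure coordinated strategy $\bar s^{-i}_r$ as follows.
\begin{enumerate}[label=(\alph*)]
\item The other agents tell the truth up to period $t-1$.
\item At period $t$ they report $r$, whatever their true types; this is a legitimate strategy because it is a constant function of their information.
\item From period $t+1$ on they report ``as if'' their period-$t$ types had been $r$, drawing the fictitious continuation of their own types from the kernels $\mu^j$ started at $r^j$.
\end{enumerate}
The coordinator can simulate step (c) because of the independence assumption \eqref{eq:product-transition}. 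Conditional on public information, the others' types do not affect agent $i$'s type distribution. So the joint law of agent $i$'s types and all reports under this deviation coincides with the truthful law in which the others' period-$t$ types equal $r$. Randomization is needed for the simulation, so we use the mixed-strategy extension of Lemma~\ref{Proposition 1} noted after its proof. Alternatively, we take a pure strategy in the support and average.

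Under this deviation agent $i$ is truthful, and by Note~\ref{note:total payoff} his utility depends only on the reports. Therefore
\begin{align}
\cE_{\mu(\eta_{t,\emptyset},\bar s^{-i}_r,s_*^i)}\lt[U^i(\bar s^{-i}_r,s_*^i,y^i)\rt](\te^*)=\cE_{\mu(\eta_{t,\I\setminus\{i\}})}\lt[U^i(y^i)\rt](\te^{*,r}),
\end{align}
where $\te^{*,r}$ agrees with $\te^*$ on $h_{t,\emptyset}$ and has the others' period-$t$ coordinates replaced by $r$. Lemma~\ref{Proposition 1} then gives $\cE_{\mu(\eta_{t,\I\setminus\{i\}})}[U^i(y^i)](\te^{*,r})\ge\cE_{\mu(\eta_{t,\emptyset})}[U^i(y^i)](\te^*)$ a.s. Finitely many $r$ are involved, so we can intersect the a.s.\ sets. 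Taking $r$ equal to the true period-$t$ reports on each path yields $Z\ge0$ a.s.

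\textbf{Main obstacle.} The delicate step is (c), the ``as if $r$'' continuation. We must verify that the law of future public histories and of agent $i$'s payoff under the deviation equals the anticipated law $\mu(h_{t,\I\setminus\{i\}})$ with the others' coordinates set to $r$. This must hold even though the true types of the others differ from $r$, and even when the history $\te^{*,r}$ has zero probability. It rests on the product structure \eqref{eq:product-transition} and on private values, so agent $i$'s payoff is unaffected by the others' true types. We would prove it by induction over periods $\tau>t$, comparing the transition kernels step by step.
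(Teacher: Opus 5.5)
Your proposal is correct and takes essentially the same route as the paper. The paper also applies Lemma~\ref{Proposition 1} to the deviation in which the other agents report their true period-$t$ types and then draw later reports from the anticipated distribution of their states. It phrases this as the chain $\E_{\mu(h_{t,\I\setminus\{i\}}(\te))}[U^i(y^i)]\ge D^i\ge C^i\ge\E_{\mu(h_{t,\emptyset}(\te))}[U^i(y^i)]$ of infima over pure strategies, with the first step a convex-combination argument, and then uses the tower property to force the nonnegative difference to vanish almost surely. Your explicit treatment of the ``as if $r$'' continuation via the product structure \eqref{eq:product-transition} spells out what the paper justifies in one line.
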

Of course, if every state profile occurs with positive probability under truth-telling, then "$\mu(h_{0,\I})$-a.s." can be replaced by "pointwise".

\begin{proof}[Proof of Proposition \ref{cor1}]
    Fix an agent $i$ and a state path $\te\in(\Te)^T$ that occurs with positive probability under truth-telling. Write $C^i$ for the utility guarantee of agent $i$ conditional on the state history $h_{t,\emptyset}(\te)$ if he reports truthfully and all other agents report truthfully before period $t$:
    \begin{align}
        C^i=\inf\{\E_{\mu(h_{t,\emptyset}(\te),\overline{s})}\lt[U^i(\overline{s},y^i)\rt]\mid s\in\overline{\S}^\I\text{ is a pure coordinated strategy, where }\cr s^i\text{ is truth-telling, and }s^j\text{ is truth-telling before period }t\text{ for each }j\neq i\}.
    \end{align}
    Write $D^i$ for the utility guarantee of agent $i$ conditional on the state history $h_{t,\emptyset}(\te)$ if he reports truthfully, all other agents are truth-telling before period $t$, and in period $t$ each agent $j\neq i$ reports $\te_t^j$:
    \begin{align}
        D^i=\inf\{\E_{\mu(h_{t,\emptyset}(\te),\overline{s})}\lt[U^i(\overline{s},y^i)\rt]\mid s\in\overline{\S}^\I\text{ is a pure coordinated strategy, where }\cr s^i\text{ is truth-telling, and }s^j\text{ is truth-telling before period }t\text{ and reports }\cr \te_t^j\text{ in period }t\text{ for each }j\neq i\}.
    \end{align}
    Lemma \ref{Proposition 1} states that $C^i\geq\E_{\mu(h_{t,\emptyset}(\te))}\lt[U^i(y^i)\rt]$. $D^i\geq C^i$, because shrinking a set does not decrease its infimum. Lastly, $\E_{\mu(h_{t,\I\setminus\{i\}}(\te))}\lt[U^i(y^i)\rt]\geq D^i$, because $\E_{\mu(h_{t,\I\setminus\{i\}}(\te))}\lt[U^i(y^i)\rt]$ is a convex combination of elements of the set whose infimum is $D^i$: after each agent $j\neq i$ reports $\te_t^j$ in period $t$, in the subsequent periods they distribute their reports according to the anticipated distribution of their states. By transitivity
    \begin{align}
        E_{\mu(h_{t,\I\setminus\{i\}}(\te))}\lt[U^i(y^i)\rt]\geq\E_{\mu(h_{t,\emptyset}(\te))}\lt[U^i(y^i)\rt].
    \end{align}
    
    This holds for every state profile with positive probability under truth-telling, meaning
    \begin{align}\label{eq:a.s. inequality 1}
        \cE_{\mu(\eta_{t,\I\setminus\{i\}})}\lt[U^i(y^i)\rt]\geq\cE_{\mu(\eta_{t,\emptyset})}\lt[U^i(y^i)\rt]\qquad\mu(h_{0,\I})\text{-a.s.},
    \end{align}
    i.e.
    \begin{align}\label{eq:a.s. inequality}
        \cE_{\mu(\eta_{t,\I\setminus\{i\}})}\lt[U^i(y^i)\rt]-\cE_{\mu(\eta_{t,\emptyset})}\lt[U^i(y^i)\rt]\geq0\qquad\mu(h_{0,\I})\text{-a.s.}
    \end{align}
    Furthermore, by the tower property,
    \begin{align}
        \E_{\mu(h_{0,\I})}\lt[\cE_{\mu(\eta_{t,\I\setminus\{i\}})}\lt[U^i(y^i)\rt]-\cE_{\mu(\eta_{t,\emptyset})}\lt[U^i(y^i)\rt]\rt]=0,
    \end{align}
    hence the integrand, which we already know to be non-negative almost surely, is zero almost surely. The proof is complete.
\end{proof}

\subsection{Sufficient condition}
The next result gives a sufficient off-path condition. It is stronger than Corollary~\ref{cor1}, because arbitrary deviations may generate report histories that have zero probability under truthful reporting.

\begin{Lemma}[Two conditional-law identities]\label{lem:conditional-laws}\footnote{This lemma is quite setup-specific. It requires that announcements are made simultaneously in each period and the independent types assumption.}
Fix a period $t>0$ and an agent $i$ who reports truthfully, while the other agents may use an arbitrary pure coordinated strategy $\overline{s}^{-i}$. Fix a state profile $\te\in(\Te)^{T}$, and for an arbitrary $\te^*\in(\Te)^{T}$ write $\h{\te^*}$ for the reports generated from $\te^*$ by the coordinated strategy $\overline{s}=(\overline{s}^{-i},s_*^i)$, where $s_*^i$ is $i$'s truth-telling strategy. $f$ and $g$ denote functions of the displayed histories.

\begin{enumerate}[label=(\roman*)]
\item $\cE_{\mu(\eta_{t,\I\setminus\{i\}},\overline{s})}\lt[f(h_{t,\I}(\h{\td\te}))\rt](\te)=\cE_{\mu(\eta_{t,\I\setminus\{i\}})}\lt[f(h_{t,\I}(\td\te))\rt](\h\te)$.
\item $\cE_{\mu(\eta_{t-1,\I},\overline{s})}\lt[g(h_{t,\emptyset}(\h{\td\te}))\rt](\te)=\cE_{\mu(\eta_{t-1,\I})}\lt[g(h_{t,\emptyset}(\td\te))\rt](\h\te)$.\footnote{Part (ii) does not actually require any agent to be truthful.}
\end{enumerate}
\end{Lemma}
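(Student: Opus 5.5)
The plan is to show that, in each part, the two sides are expectations of the same function of the relevant report history under the same finite distribution. It therefore suffices to identify the distribution of the displayed history under each of the two continuation laws.

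For part (i), fix $\te$ and put $h=h_{t,\I\setminus\{i\}}(\te)$. Under $\mu(h,\x,\overline s)$ all states through period $t-1$ are fixed, as are $\te_t^0$ and $\te_t^{-i}$. The only random coordinate that enters $h_{t,\I}(\h{\td\te})$ is $\td\te_t^i$.
\begin{itemize}
\item The reports through period $t-1$ and the public state at $t$ are determined by $\te$. They therefore coincide with the corresponding coordinates of $\h\te$.
\item In period $t$ the other agents' reports $\h{\td\te}{}_t^{-i}=\overline s^{-i}_t(\te^{-i}_{\le t},h_{t,\emptyset}(\h\te))$ depend only on their own states and the public partial history. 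By simultaneity they do not depend on $\td\te_t^i$, so they equal $\h\te_t^{-i}$.
\item Agent $i$ is truthful, so $\h{\td\te}{}_t^i=\td\te_t^i$.
\end{itemize}
Hence $h_{t,\I}(\h{\td\te})=(h_{t,\I\setminus\{i\}}(\h\te),\td\te_t^i)$.

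Next I would find the law of $\td\te_t^i$ under the actual continuation law. By the product form \eqref{eq:product-transition}, $\td\te_t^i\sim\mu^i(\te_{t-1}^0,\te_{t-1}^i,x_{t-1})$ with $x_{t-1}=\x(h_{t-1}(\h\te))$. On the other side, under the anticipated law $\mu(h_{t,\I\setminus\{i\}}(\h\te))$, the coordinate $\td\te_t^i$ is drawn from $\mu^i(\h\te_{t-1}^0,\h\te_{t-1}^i,\x(h_{t-1}(\h\te)))$. These two kernels agree for three reasons:
\begin{itemize}
\item agent $i$ is truthful, so $\h\te^i_{t-1}=\te^i_{t-1}$;
\item the public state is always reported truthfully, so $\h\te^0_{t-1}=\te^0_{t-1}$;
\item the decision depends only on reports.
\end{itemize}
Independence matters here in a specific way: conditioning on the fixed $\te_t^{-i}$, or on the others' true past states, does not change the distribution of $\td\te_t^i$. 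So both sides are $\sum_{a}\mu^i(\cdot)(a)\,f(h_{t,\I\setminus\{i\}}(\h\te),a)$ and are equal.

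For part (ii) the argument is the same, with $h=h_{t-1,\I}(\te)$ and the single random coordinate $\td\te_t^0$. The report history through period $t-1$ is $h_{t-1}(\h\te)$ and is deterministic. The public state is not subject to misreporting, so $h_{t,\emptyset}(\h{\td\te})=(h_{t-1}(\h\te),\td\te_t^0)$. Its law is $\mu^0(\te^0_{t-1},x_{t-1})$, which equals $\mu^0(\h\te^0_{t-1},\x(h_{t-1}(\h\te)))$ because the public state at $t-1$ is always reported truthfully. The private coordinates drawn at $t$ do not enter $g$. This also explains why truthfulness of agent $i$ is not needed.

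I expect the main obstacle to be the bookkeeping in part (i). One has to verify that the marginal of $\td\te_t^i$ under the joint law $\mu(\te_{t-1},x_{t-1})$, conditioned on the fixed coordinates $\te_t^0,\te_t^{-i}$, is exactly $\mu^i(\te^0_{t-1},\te^i_{t-1},x_{t-1})$. This rests on the product structure \eqref{eq:product-transition}. One also has to verify that the others' period-$t$ reports are measurable with respect to information that excludes $\td\te_t^i$, which is where simultaneous announcements are needed. For mixed or zero-probability report histories nothing changes, since the anticipated law is defined for every history.
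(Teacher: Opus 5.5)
Your proposal is correct and follows essentially the same route as the paper: you split the displayed history into a deterministic part equal to $h_{t,\I\setminus\{i\}}(\h\te)$ (resp.\ $h_{t-1}(\h\te)$) plus a single random coordinate $\td\te_t^i$ (resp.\ $\td\te_t^0$), and you match that coordinate's law using the product kernel, the truthfulness of $i$, the verifiability of the public state, and simultaneous announcements. The only difference is that you write the kernel $\mu^i(\te^0_{t-1},\te^i_{t-1},\x(h_{t-1}(\h\te)))$ explicitly, where the paper states the equality of distributions in words.
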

\begin{proof}
    \begin{enumerate}[label=(\roman*)]
        \item By the definition of the transition probability function \eqref{eq:product-transition}, the distribution of $i$'s period-$t$ state is the same under the continuation law $\mu(h_{t,\I\setminus\{i\}}(\te),\overline{s})$ and $\mu(h_{t,\I\setminus\{i\}}(\h\te))$, as $i$ is truth-telling and the public state is verifiable. On the left, because the other colluding agents cannot observe $i$'s current true type or $i$'s current report or the future when formulating their reports, and because $i$ is truth-telling,
        \begin{align}
            h_{t,\I}(\h{\td\te})=(h_{t,\I\setminus\{i\}}(\h\te),\td\te_t^i),\qquad\text{where }\td\te\sim\mu(h_{t,\I\setminus\{i\}}(\te),\overline{s}).
        \end{align}
        On the right,
        \begin{align}
            h_{t,\I}(\td\te)=(h_{t,\I\setminus\{i\}}(\h\te),\td\te_t^i),\qquad\text{where }\td\te\sim\mu(h_{t,\I\setminus\{i\}}(\h\te)).
        \end{align}
        So on both sides, $f$ receives the same input, and the two displayed histories have the same distribution. Thus, the two sides match.
        \item Similarly to (i), the distribution of the period-$t$ public state is the same under the continuation laws $\mu(h_{t-1,\I}(\te),\overline{s})$ and $\mu(h_{t-1,\I}(\h\te))$ as the public state is verifiable. On the left,
        \begin{align}
            h_{t,\emptyset}(\h{\td\te})=(h_{t-1,\I}(\h\te),\td\te_t^0),\qquad\text{where }\td\te\sim\mu(h_{t-1,\I}(\te),\overline{s}),
        \end{align}
        because agents cannot observe the future, and because the public state is verifiable. On the right,
        \begin{align}
            h_{t,\emptyset}(\td\te)=(h_{t-1,\I}(\h\te),\td\te_t^0),\qquad\text{where }\td\te\sim\mu(h_{t-1,\I}(\h\te)).
        \end{align}
        The proof concludes as in (i).
    \end{enumerate}
\end{proof}

\begin{Proposition}[Sufficient off-path condition]\label{elegseges} 
    If $y$ is a budget-balanced transfer rule, and for every period $t\in[T]$ and agent $i\in\I$,
    \begin{align}\label{eq:elegseges}
        \cE_{\mu(\eta_{t,\I\setminus\{i\}})}\lt[U^i(y^i)\rt]=\cE_{\mu(\eta_{t,\emptyset})}\lt[U^i(y^i)\rt]\qquad\text{(pointwise)},
    \end{align}
    then $y$ is a GUE-implementation of truthful reporting.
\end{Proposition}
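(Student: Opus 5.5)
Fix an agent $i$ who reports truthfully and an arbitrary pure coordinated strategy $\overline{s}^{-i}$ of the others; write $\overline{s}=(\overline{s}^{-i},s_*^i)$. Mixed strategies follow by averaging, since \eqref{GUE property} is an ex ante statement. The plan is a telescoping (martingale-type) argument. Define the truthful value function
\begin{align}
V_{t,H}\coloneq\cE_{\mu(\eta_{t,H})}\lt[U^i(y^i)\rt],
\end{align}
a function of the partial history $h_{t,H}$. Under the deviation, evaluate it at the realized report history $\h\te$. By Note \ref{note:total payoff}, since $i$ is truthful, his realized utility under $\overline{s}$ equals $U^i(y^i)(\h\te)$, that is, the truthful utility function evaluated at the report profile. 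Hence $U^i(\overline{s},y^i)(\te)=V_{T,\I}(\h\te)$, whereas the right-hand side of \eqref{GUE property} is $V_{0,\I}$, a constant (the initial state is fixed and reported truthfully). It therefore suffices to show, for each $t\in[T]$,
\begin{align}
\E_{\mu(h_{0,\I},\overline{s})}\lt[V_{t,\I}(\h{\td\te})\rt]\geq\E_{\mu(h_{0,\I},\overline{s})}\lt[V_{t-1,\I}(\h{\td\te})\rt],
\end{align}
and then to chain these inequalities from $t=1$ to $T$.

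Split each step into three links: $V_{t-1,\I}\to V_{t,\emptyset}\to V_{t,\I\setminus\{i\}}\to V_{t,\I}$, all evaluated at $\h\te$. For the first link, apply Lemma \ref{lem:conditional-laws}(ii) with $g=V_{t,\emptyset}$. Conditionally on $h_{t-1,\I}(\te)$, the deviation-law expectation of $V_{t,\emptyset}(\h{\td\te})$ equals $\cE_{\mu(\eta_{t-1,\I})}[V_{t,\emptyset}](\h\te)$, which by Lemma \ref{lem:LOTE}(ii) equals $V_{t-1,\I}(\h\te)$. The second link is an equality by hypothesis \eqref{eq:elegseges}. It is here that the \emph{pointwise} version is essential, because $h_{t,\I\setminus\{i\}}(\h\te)$ may have zero probability under truthful reporting. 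For the third link, apply Lemma \ref{lem:conditional-laws}(i) with $f=V_{t,\I}$. Conditionally on $h_{t,\I\setminus\{i\}}(\te)$, the deviation-law expectation of $V_{t,\I}(\h{\td\te})$ equals $\cE_{\mu(\eta_{t,\I\setminus\{i\}})}[V_{t,\I}](\h\te)=V_{t,\I\setminus\{i\}}(\h\te)$, again by Lemma \ref{lem:LOTE}(ii). Taking ex ante expectations under $\mu(h_{0,\I},\overline{s})$ and using the tower property for the deviation law, all three links become equalities. Chaining them over $t$ then yields $\E_{\mu(h_{0,\I},\overline{s})}[U^i(\overline{s},y^i)]=V_{0,\I}$, which in particular gives \eqref{GUE property}. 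Combined with Lemma \ref{utilitysum}, this is exactly Definition \ref{def:GUE-transfer}.

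The main obstacle is the careful justification of the conditional-law transfers. Lemma \ref{lem:conditional-laws} is stated with $\te$ fixed and with the conditioning kernels for the deviation law. One must check three things. First, $V_{t,\emptyset}$ and $V_{t,\I}$ are genuinely functions of the displayed partial histories, as Lemma \ref{lem:conditional-laws} requires. Second, the tower property holds for $\mu(\cdot,\overline{s})$, i.e.\ the analogue of Lemma \ref{lem:LOTE}(ii) with $\overline{s}$, at the nested kernels $\eta_{t-1,\I}\supset\eta_{t,\emptyset}$ in the appropriate direction; this is just successive conditioning on a growing information set. Third, one must verify the identity $U^i(\overline{s},y^i)(\te)=V_{T,\I}(\h\te)$. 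This identity uses the private-values assumption, which makes $i$'s payoff depend only on his own (truthfully reported) types and on decisions that are functions of reports, together with the fact that $V_{T,\I}$ conditions on the whole history. I would verify these points first and then run the telescoping.
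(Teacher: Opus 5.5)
Your proposal is correct and follows the paper's proof essentially step for step. Both use the same per-period chain $V_{t-1,\I}\to V_{t,\emptyset}\to V_{t,\I\setminus\{i\}}\to V_{t,\I}$ evaluated at the reports, justified by Lemma \ref{lem:conditional-laws}(ii), the pointwise hypothesis \eqref{eq:elegseges}, and Lemma \ref{lem:conditional-laws}(i), then the tower property under the deviation law and telescoping down to the constant $\cE_{\mu(\eta_{0,\I})}[U^i(y^i)]$, which gives equality in \eqref{GUE property}. The final appeal to Lemma \ref{utilitysum} is unnecessary, since Definition \ref{def:GUE-transfer} requires only \eqref{GUE property}; that lemma is used only for Proposition \ref{prop:equivalence with GUE}.
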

\begin{proof}
        Suppose the hypothesis. Fix an agent $i$ and a collusive strategy of the other agents $\overline s^{-i}\in\overline\S^{-i}$. Let $s_*$ denote the profile of truth-telling strategies.

        Because $i$ is truth-telling and by the private values assumption,
        \begin{align}
            \E_{\mu(h_{0,\I},\overline{s}^{-i},s_*^i)}\lt[U^i(\overline{s}^{-i},s_*^i,y^i)(\td\te)\rt]=\E_{\mu(h_{0,\I},\overline{s}^{-i},s_*^i)}\lt[U^i(y^i)(\h{\td\te})\rt]=\E_{\mu(h_{0,\I},\overline{s}^{-i},s_*^i)}\lt[\cE_{\mu(\eta_{T,\I})}\lt[U^i(y^i)\rt](\h{\td\te})\rt].
        \end{align}

        We will now show that for every $t>0$,
        \begin{align}
            \E_{\mu(h_{0,\I},\overline{s}^{-i},s_*^i)}\lt[\cE_{\mu(\eta_{t,\I})}\lt[U^i(y^i)\rt](\h{\td\te})\rt]=\E_{\mu(h_{0,\I},\overline{s}^{-i},s_*^i)}\lt[\cE_{\mu(\eta_{t-1,\I})}\lt[U^i(y^i)\rt](\h{\td\te})\rt].
        \end{align}
        For every $\te\in(\Te)^{T}$, Lemma \ref{lem:conditional-laws}(i) and the tower property give
        \begin{align}
            \cE_{\mu(\eta_{t,\I\setminus\{i\}},\overline{s}^{-i},s_*^i)}\lt[\cE_{\mu(\eta_{t,\I})}\lt[U^i(y^i)\rt](\h{\td\te})\rt](\te)=\cE_{\mu(\eta_{t,\I\setminus\{i\}})}\lt[U^i(y^i)\rt](\h\te).
        \end{align}
        By the hypothesis, this equals $\cE_{\mu(\eta_{t,\emptyset})}\lt[U^i(y^i)\rt](\h\te)$. For every $\te\in(\Te)^{T}$, Lemma \ref{lem:conditional-laws}(ii) and the tower property give
        \begin{align}
            \cE_{\mu(\eta_{t-1,\I},\overline{s}^{-i},s_*^i)}\lt[\cE_{\mu(\eta_{t,\emptyset})}\lt[U^i(y^i)\rt](\h{\td\te})\rt](\te)=\cE_{\mu(\eta_{t-1,\I})}\lt[U^i(y^i)\rt](\h\te).
        \end{align}
        Applying the law of total expectation twice, the above equalities yield
        \begin{align}
            &\E_{\mu(h_{0,\I},\overline{s}^{-i},s_*^i)}\lt[\cE_{\mu(\eta_{t,\I})}\lt[U^i(y^i)\rt](\h{\td\te})\rt]=\E_{\mu(h_{0,\I},\overline{s}^{-i},s_*^i)}\lt[\cE_{\mu(\eta_{t,\I\setminus\{i\}})}\lt[U^i(y^i)\rt](\h{\td\te})\rt]\\
            &=\E_{\mu(h_{0,\I},\overline{s}^{-i},s_*^i)}\lt[\cE_{\mu(\eta_{t,\emptyset})}\lt[U^i(y^i)\rt](\h{\td\te})\rt]=\E_{\mu(h_{0,\I},\overline{s}^{-i},s_*^i)}\lt[\cE_{\mu(\eta_{t-1,\I})}\lt[U^i(y^i)\rt](\h{\td\te})\rt].
        \end{align}

        So,
        \begin{align}
            \E_{\mu(h_{0,\I},\overline{s}^{-i},s_*^i)}\lt[\cE_{\mu(\eta_{T,\I})}\lt[U^i(y^i)\rt](\h{\td\te})\rt]=\dots=\E_{\mu(h_{0,\I},\overline{s}^{-i},s_*^i)}\lt[\cE_{\mu(\eta_{0,\I})}\lt[U^i(y^i)\rt](\h{\td\te})\rt],
        \end{align}
        which concludes the proof as $\cE_{\mu(\eta_{0,\I})}\lt[U^i(y^i)\rt](\h{\td\te})\equiv\E_{\mu(h_{0,\I})}[U^i(y^i)]$.
\end{proof}

Together, Proposition~\ref{cor1} and Proposition~\ref{elegseges} give an almost-equivalent conditional characterization of GUE-implementation of truthful reporting that becomes a true characterization under a full support assumption.

\section{Utility Variance Minimizing GUE-Implementation of truthful reporting}
In this section, we prove that a modified version of sym-GUM transfers minimizes the total utility variance when all agents are truth-telling.

\paragraph{GUM.}Fix the order $0,1,\dots,N$ of agents, where $0$ denotes the public signal. For distinct agents $i$ and $j$,
\begin{align}
    \gamma_t^{i\to j}=\Upsilon_{t,[i]}^j-\Upsilon_{t,[i-1]}^j.
\end{align}
The \textbf{GUM} transfer to agent $i$ is
\begin{align}\label{eq:GUM transfers}
    y_t^i=\sum_{j\neq i}\lt(\gamma_t^{i\to j}-\gamma_t^{j\to i}\rt),\qquad t>0,
\end{align}
with $y_0^i=0$. The terms $\Upsilon_{[j]}^i-\Upsilon_{[j-1]}^i$ telescope, giving the alternate form
\begin{align}\label{eq:GUM alterante formula}
    y_t^i=\lt(\Upsilon_{t,\emptyset}^i-\Upsilon_{t,\I}^i\rt)+\lt(\Upsilon_{t,[i]}^\I-\Upsilon_{t,[i-1]}^\I\rt),\qquad t>0.
\end{align}

GUM is analogously defined for an arbitrary order of agents.

\begin{Note}\label{GUMisGUE}
    The GUM transfer rule \eqref{eq:GUM transfers} is indeed a GUE-implementation of truthful reporting.
\end{Note}
\begin{proof}
    The transfer rule is budget-balanced as it is composed of bilateral transfers. We then verify \eqref{eq:elegseges} with the tower property, so by proposition \ref{elegseges}, the transfer rule is a GUE-implementation of truthful reporting. Appendix \ref{app:GUMisGUEcalculation} records the calculation in expanded form.
\end{proof}

\paragraph{Sym-GUM.}Sym-GUM is the transfer rule we obtain by calculating GUM transfers for all $N!$ orders of agents, and taking their average. Calculating the formula for \textbf{sym-GUM}, we get that the transfer to agent $i$ in period $t>0$ is
\begin{equation}\label{sym-GUM formula}
    \overline y_t^i=\Upsilon_{t,\emptyset}^i-\Upsilon_{t,\I}^i+\sum_{i\in H\subseteq\I}\frac{\Upsilon_{t,H}^\I-\Upsilon_{t,H\setminus\{i\}}^\I}{|H|\binom{N}{|H|}},
\end{equation}
and $\overline{y}_0^i=0$. See appendix \ref{GUM to sym-GUM} for the derivation of this. This transfer rule is a GUE-implementation of truthful reporting since it is a convex combination of GUE-implementations of truthful reporting.

It is quite intuitive that without the symmetrization step, GUM cannot minimize total utility variance. It is also somewhat intuitive that sym-GUM still cannot minimize total utility variance, because it lets the public state help one agent more than another. We present the following example that demonstrates this.

\begin{Example}[Why sym-GUM alone need not minimize utility risk]\label{ex:public-state-risk}
Let $N=2$ and $T=1$. Each private type space is a singleton. In period $1$, the public state is ``agent 1 wins'' or ``agent 2 wins'', each with probability $1/2$, independently of everything. There are three decisions: ``agent 1 won'', ``agent 2 won'', and ``stalemate''. The fixed policy chooses stalemate at $t=0$ and follows the public state at $t=1$.

If "agent 1 won" is decided, primitive payoffs are $(10,-10)$; if "agent 2 won" is, they are $(-10,10)$; under stalemate they are $(0,0)$. Total payoff is always zero, so the policy is efficient. Private reports have no effect, hence all GUM externalities and sym-GUM transfers are zero. Each agent's utility is therefore $10$ or $-10$.

Alternatively, transfer to each agent the negative of that agent's realized primitive payoff in period $1$. The rule is budget-balanced, and because no private report can vary, is a GUE-implementation of truthful reporting. Both agents then receive utility zero with certainty, so this transfer rule yields a strictly lower total utility variance than sym-GUM.
\end{Example}

\paragraph{Modified Sym-GUM.}This shortcoming of sym-GUM can easily be rectified. All we have to do is introduce additional transfers in each period that even out the effect of the public state. After the public state is revealed in some period $t>0$, every agent $i$ transfers one $N$th of the change in his anticipated total payoff caused by the revelation of the public state (i.e. $\frac{\Upsilon_{t,\emptyset}^i-\Upsilon_{t-1,\I}^i}{N}$) to all other agents. For each agent $i$, this means an 
\textbf{additional transfer} to $i$ of
\begin{equation}\label{additional transfers}
    {y^+}_t^i=\frac{\Upsilon_{t,\emptyset}^{\I\setminus\{i\}}-\Upsilon_{t-1,\I}^{\I\setminus\{i\}}}{N}-(N-1)\frac{\Upsilon_{t,\emptyset}^{i}-\Upsilon_{t-1,\I}^{i}}{N}=\frac{\Upsilon_{t,\emptyset}^{\I}-\Upsilon_{t-1,\I}^{\I}}{N}-\Upsilon_{t,\emptyset}^{i}+\Upsilon_{t-1,\I}^{i}
\end{equation}
in period $t>0$, with ${y^+}_0^i=0$. The formula we then get for the \textbf{modified sym-GUM transfer} to agent $i$ in period $t>0$ is
\begin{equation}\label{modifiedsymGUM}
    {\overline{y}^+}_t^i=\overline y_t^i+{y^+}_t^i=\frac{\Upsilon_{t,\emptyset}^\I-\Upsilon_{t-1,\I}^\I}{N}+\Upsilon_{t-1,\I}^i-\Upsilon_{t,\I}^i+\sum_{i\in H\subseteq\I}\frac{\Upsilon_{t,H}^\I-\Upsilon_{t,H\setminus\{i\}}^\I}{|H|\binom{N}{|H|}},
\end{equation}
and the transfer in period $0$ is $0$ for every agent.
\begin{Proposition}\label{prop:modified sym-GUM is GUE}
    This modified version of sym-GUM is still a GUE-implementation of truthful reporting.
\end{Proposition}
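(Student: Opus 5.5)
The plan is to check the hypotheses of Proposition~\ref{elegseges} for the modified rule $\overline{y}^+=\overline y+y^+$. The rule $\overline y$ is a convex combination of GUM rules, each of which satisfies \eqref{eq:elegseges} pointwise (Note~\ref{GUMisGUE} and Appendix~\ref{app:GUMisGUEcalculation}). Condition \eqref{eq:elegseges} is linear in the transfer rule once the payoff part is fixed, so $\overline y$ satisfies it too. It therefore suffices to show two facts about the additional transfers $y^+$: that they are budget-balanced, and that each agent's total additional transfer $Y^{+i}=\sum_{t}{y^+}_t^i$ satisfies
\[
\cE_{\mu(\eta_{t,\I\setminus\{i\}})}\lt[Y^{+i}\rt]=\cE_{\mu(\eta_{t,\emptyset})}\lt[Y^{+i}\rt]\quad\text{pointwise for all } t\in[T].
\]
Adding this equality to the one already known for $\overline y$ (utility with transfers $\overline y$) gives \eqref{eq:elegseges} for $\overline y^+$, since the utility is the payoff plus $Y^i$.

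Budget balance is immediate from the second form in \eqref{additional transfers}. Summing over $i$ gives
\[
\bigl(\Upsilon_{t,\emptyset}^\I-\Upsilon_{t-1,\I}^\I\bigr)-\bigl(\Upsilon_{t,\emptyset}^\I-\Upsilon_{t-1,\I}^\I\bigr)=0.
\]

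For the conditional identity, split $Y^{+i}$ by periods. Every term ${y^+}_\tau^i$ with $\tau\le t-1$, and also $\tau=t$, is a function of the partial history $h_{t,\emptyset}$ only: it involves $\Upsilon_{\tau,\emptyset}$ and $\Upsilon_{\tau-1,\I}$, both of which are determined once the period-$\tau$ public state is recorded. By Lemma~\ref{lem:LOTE}(i), both conditional expectations leave these terms unchanged. For $\tau>t$, I will apply Lemma~\ref{lem:LOTE}(ii) to each summand. Since $\eta_{\tau,\emptyset}$ and $\eta_{\tau-1,\I}$ contain $\eta_{t,\I\setminus\{i\}}\supseteq\eta_{t,\emptyset}$, conditioning $\Upsilon_{\tau,\emptyset}^K$ and $\Upsilon_{\tau-1,\I}^K$ on $\eta_{t,H}$ returns $\Upsilon_{t,H}^K$ for $H=\I\setminus\{i\}$ or $H=\emptyset$. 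Each difference $\Upsilon_{\tau,\emptyset}^K-\Upsilon_{\tau-1,\I}^K$ therefore has conditional expectation zero under both kernels. Hence both sides equal the sum of the terms with $\tau\le t$, and they agree pointwise.

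The main point needing care, rather than a real obstacle, is to ensure that every identity holds \emph{pointwise}, including at report histories of probability zero under truth-telling, because Proposition~\ref{elegseges} requires this. Lemma~\ref{lem:LOTE} is stated for the kernels themselves, and the anticipated laws $\mu(h_{t,H})$ are defined for every history, so the tower identities apply at every point. I would also record why the additive decomposition is legitimate. Once the decision policy is fixed, the utility $U^i(y^i)$ is affine in $Y^i$, with a payoff part that does not depend on the transfers. So \eqref{eq:elegseges} for the sum follows from \eqref{eq:elegseges} for $\overline y$ together with the zero-conditional-drift identity for $Y^{+i}$ above.
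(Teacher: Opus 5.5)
Your proof is correct, but it checks a different condition than the paper does. The paper never invokes Proposition~\ref{elegseges} here. It uses only that sym-GUM already satisfies the GUE inequality \eqref{GUE property}. It then shows, directly under the deviation law, that $\E_{\mu(h_{0,\I},\bar s)}\bigl[\Upsilon_{t,\emptyset}^k(\h{\td\te})-\Upsilon_{t-1,\I}^k(\h{\td\te})\bigr]=0$ for every $k$, $t$ and $\bar s=(\bar s^{-i},s_*^i)$, via Lemma~\ref{lem:conditional-laws}(ii) and the law of total expectation. Hence agent $i$'s expected total additional transfer is $0$ whatever the others do, and the GUE inequality for $\overline y$ carries over to $\overline y^+$.

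You instead stay entirely in the truth-telling world. You lift the pointwise identity \eqref{eq:elegseges} from each ordered GUM to sym-GUM by linearity. You then show, with Lemma~\ref{lem:LOTE} alone, that ${Y^+}^i$ has the same conditional expectation under $\eta_{t,\I\setminus\{i\}}$ and $\eta_{t,\emptyset}$. All reasoning about deviations is absorbed by Proposition~\ref{elegseges}.

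What each route buys:
\begin{itemize}
\item Your route matches the appendix verifications for GUM and the GUM-like rule. It also yields the stronger structural fact that $\overline y^+$ itself satisfies the pointwise condition \eqref{eq:elegseges}. Its cost is that you need the pointwise property of sym-GUM, not merely its GUE property.
\item The paper's route is shorter once sym-GUM is known to be GUE. It exhibits a different strengthening: the public-state sharing is neutral in expectation under arbitrary coordinated deviations. By the footnote to Lemma~\ref{lem:conditional-laws}(ii), the computation does not even use $i$'s truthfulness.
\end{itemize}
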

\begin{proof}
    Since we already know that sym-GUM is a GUE-implementation of truthful reporting, it will suffice to show that the additional transfers $\{{y^+}_t^i\}_{t=0,i\in\I}^T$ in \eqref{additional transfers} are budget-balanced, and that for every agent $i$, if he follows his truth-telling strategy, then regardless of what coordinated strategy the other agents choose, his expected total additional transfer will be the same as when all agents are truth-telling. In other words, we need
    \begin{align}
        \forall t=0,\dots,T:\sum_{i\in\I}{y^+}_t^i&\equiv0,\label{balance}\\
        \forall i\in\I,\forall \bar s^{-i}\in\bar\S^{-i}:\E_{\mu(h_{0,\I},s_*^i,\bar s^{-i})}\lt[{Y^+}^i\rt]&=\E_{\mu(h_{0,\I})}\lt[{Y^+}^i\rt],\label{guarantee}
    \end{align}
    where $s_*\in\S^\I$ is the profile of truth-telling strategies, and ${Y^+}^i=\sum_{t=0}^T{y^+}_t^i$. \eqref{balance} is trivial, so we only need to prove \eqref{guarantee}.

    By the law of total expectation, the right side of \eqref{guarantee} is $0$, so it will be sufficient to show that for every agent $i$ and period $t$,
    \begin{align}
        \E_{\mu(h_{0,\I},\bar s)}\lt[\Upsilon_{t,\emptyset}^k(\widehat{\td\te})-\Upsilon_{t-1,\I}^k(\widehat{\td\te})\rt]=0
    \end{align}
    for every coordinated strategy of the form $\overline{s}=(\overline{s}^{-i},s_*^i)$ and every $k\in\I$, where $\overline{s}^{-i}\in\overline\S^{-i}$.

    Fix an agent $i$, a period $t$, a coordinated strategy $\overline s=(\overline{s}^{-i},s_*^i)$, and a $k\in\I$. For every $\te\in(\Te)^{T}$, Lemma \ref{lem:conditional-laws}(ii) gives
\begin{align}
    \Upsilon_{t-1,\I}^k(\widehat\te)=\cE_{\mu(\eta_{t-1,\I},\bar s)}\lt[\Upsilon_{t,\emptyset}^k(\h{\td\te})\rt](\te).
\end{align}
Applying the law of total expectation, the proof is complete.
\end{proof}

Under truth-telling, the tower property gives
\begin{equation}\label{eq:candidate-zero-mean}
\E_{\mu(h_{0,\I})}[{\overline{Y}^+}^i]=0
\qquad\text{for every }i.
\end{equation}

\begin{Lemma}\label{lem:Z}
    Let the transfer rules $y$ and $y'$ be two GUE-implementations of truthful reporting, define
    \begin{align}
        Z^i=Y'^i-Y^i\qquad\text{for every }i.
    \end{align}
    Then
    \begin{align}
        &\sum_{i\in\I}Z^i\equiv0\text{, and}\label{eq:Z-balance}\\
        &\cE_{\mu(\eta_{t,\I\setminus\{i\}})}\lt[Z^i\rt]=\cE_{\mu(\eta_{t,\emptyset})}\lt[Z^i\rt]\quad\mu(h_{0,\I})\text{-a.s.}\label{eq:Z-conditional}
    \end{align}
    for every agent $i$ and period $t$.
\end{Lemma}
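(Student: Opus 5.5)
The first claim, \eqref{eq:Z-balance}, follows from budget balance alone. Both $y$ and $y'$ satisfy $\sum_{i\in\I}y_t^i\equiv0$ and $\sum_{i\in\I}{y'}_t^i\equiv0$ in every period $t$. Summing over $t=0,\dots,T$ gives $\sum_{i}Y^i\equiv0$ and $\sum_i Y'^i\equiv0$, and subtracting yields $\sum_i Z^i\equiv0$ pointwise.

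The second claim, \eqref{eq:Z-conditional}, will follow from Proposition~\ref{cor1} applied to $y$ and to $y'$ separately. Since both are GUE-implementations of truthful reporting, for every $t\in[T]$ and $i\in\I$ we have
\begin{align*}
\cE_{\mu(\eta_{t,\I\setminus\{i\}})}\lt[U^i(y'^i)\rt]&=\cE_{\mu(\eta_{t,\emptyset})}\lt[U^i(y'^i)\rt],\\
\cE_{\mu(\eta_{t,\I\setminus\{i\}})}\lt[U^i(y^i)\rt]&=\cE_{\mu(\eta_{t,\emptyset})}\lt[U^i(y^i)\rt],
\end{align*}
each $\mu(h_{0,\I})$-a.s. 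Under truth-telling, $U^i(y'^i)-U^i(y^i)=Y'^i-Y^i=Z^i$ pointwise, because the primitive payoff part $\sum_t u^i(\te_t^i,\x(h_t(\te)))$ is identical for the two transfer rules (the decision policy is fixed). The operators $\cE_{\mu(\eta_{t,H})}$ are linear: they are pointwise expectations under a fixed continuation law, and all quantities are finite since the type spaces are finite. We subtract the second equality from the first. The union of two null sets is null, so the resulting identity holds $\mu(h_{0,\I})$-a.s., and it reads exactly as \eqref{eq:Z-conditional}.

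For $t=0$ the statement should be read as trivial, or outside the range $t\in[T]$ covered by Proposition~\ref{cor1}. Indeed, $\eta_{0,\I}$ is the only kernel defined at period $0$, and there both sides coincide with the unconditional expectation.

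No step is a genuine obstacle. The only point requiring care is the linearity argument: the operators act on functions of the full state profile, and we must check that the a.s. qualifiers combine correctly. Both are immediate from the definitions and from the finiteness of the state spaces.
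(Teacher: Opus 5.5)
Your proof is correct and follows the paper's own argument. Budget balance gives \eqref{eq:Z-balance}, and applying Proposition~\ref{cor1} to both $y$ and $y'$ and subtracting gives \eqref{eq:Z-conditional}; you spell out the details the paper leaves implicit (primitive payoffs cancel under truth-telling, linearity of the operators, combining the two null sets, and only $t\in[T]$ being meaningful).
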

\begin{proof}
    \eqref{eq:Z-balance} follows from budget-balance, and for \eqref{eq:Z-conditional} see Proposition \ref{cor1}.
\end{proof}

\begin{Theorem}[Utility variance minimization]\label{minutilityvariance}
    For the fixed decision policy, the modified sym-GUM transfer rule \eqref{modifiedsymGUM} minimizes total utility variance when every agent is truth-telling, among all GUE-implementations of truthful reporting. Here, we define total utility variance as $\sum_{i\in\I}\Var\lt(U^i(y^i)(\td\te)\rt)$, where $\td\te\sim\mu(h_{0,\I})$.
\end{Theorem}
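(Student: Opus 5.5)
The plan is a Pythagorean decomposition. Let $\overline{U}^i$ denote agent $i$'s utility under truth-telling with the modified sym-GUM rule \eqref{modifiedsymGUM}. For any other GUE-implementation $y'$, write $U'^i=\overline{U}^i+Z^i$ with $Z^i={Y'}^i-{\overline{Y}^+}^i$. Then
\[
\sum_{i\in\I}\Var(U'^i)=\sum_{i\in\I}\Var(\overline{U}^i)+\sum_{i\in\I}\Var(Z^i)+2\sum_{i\in\I}\mathrm{Cov}(\overline{U}^i,Z^i).
\]
It therefore suffices to show that $\sum_{i}\mathrm{Cov}(\overline{U}^i,Z^i)=0$, with all expectations taken under $\mu(h_{0,\I})$. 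The only facts about $Z$ I will use are those in Lemma \ref{lem:Z}: budget balance \eqref{eq:Z-balance} and the a.s. conditional identities \eqref{eq:Z-conditional}. Almost-sure statements suffice, because everything is computed under truth-telling.

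\textbf{Step 1: an explicit form for $\overline{U}^i$.} By Note \ref{note:total payoff}, the primitive payoff is $\Upsilon^i_{T,\I}$. Summing \eqref{modifiedsymGUM} over $t$, the terms $\Upsilon^i_{t-1,\I}-\Upsilon^i_{t,\I}$ telescope to $\Upsilon^i_{0,\I}-\Upsilon^i_{T,\I}$, which cancels the primitive payoff. This leaves
\[
\overline{U}^i=\Upsilon^i_{0,\I}+\sum_{t=1}^T\Bigl[\tfrac{1}{N}A_t+\phi^i_t\Bigr],
\]
where $A_t=\Upsilon^\I_{t,\emptyset}-\Upsilon^\I_{t-1,\I}$ and $\phi_t^i=\sum_{H\ni i}\frac{\Upsilon_{t,H}^\I-\Upsilon_{t,H\setminus\{i\}}^\I}{|H|\binom{N}{|H|}}$. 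The quantity $\phi_t^i$ is exactly the Shapley value of $i$ in the game $H\mapsto\Upsilon^\I_{t,H}$. Along the filtration $\eta_{0,\I}\subset\eta_{1,\emptyset}\subset\eta_{1,\I}\subset\eta_{2,\emptyset}\subset\cdots$, the increment $A_t$ lives on the step $\eta_{t-1,\I}\to\eta_{t,\emptyset}$, and $\phi_t^i$ lives on the step $\eta_{t,\emptyset}\to\eta_{t,\I}$.

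\textbf{Step 2: reduction to single steps.} Decompose $Z^i$ into martingale increments along the same filtration: $D^i_t=\cE_{\mu(\eta_{t,\emptyset})}[Z^i]-\cE_{\mu(\eta_{t-1,\I})}[Z^i]$ and $P^i_t=\cE_{\mu(\eta_{t,\I})}[Z^i]-\cE_{\mu(\eta_{t,\emptyset})}[Z^i]$. Orthogonality of martingale increments (via Lemma \ref{lem:LOTE}) gives
\[
\mathrm{Cov}(\overline{U}^i,Z^i)=\sum_t \E\bigl[\tfrac1N A_t D^i_t\bigr]+\sum_t\E\bigl[\phi^i_tP^i_t\bigr].
\]
The public terms vanish after summing over $i$: $A_t$ does not depend on $i$, and $\sum_iD^i_t=0$ by \eqref{eq:Z-balance}.

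\textbf{Step 3: the private-report step, which I expect to be the main obstacle.} Condition on $h_{t,\emptyset}$. By the product structure \eqref{eq:product-transition}, the types $\theta^1_t,\dots,\theta^N_t$ are then independent. I will use the Hoeffding (ANOVA) decomposition of functions of these $N$ independent coordinates. Any such function $F$ decomposes as $F=\sum_{S\subseteq\I}F_S$, where $F_S$ depends only on the coordinates in $S$, and $\cE_{\mu(\eta_{t,H})}[F]=\sum_{S\subseteq H}F_S$. I apply this to the period-$t$ versions $F=\cE_{\mu(\eta_{t,\I})}[\Upsilon^{\I}_{T,\I}]$ and $G^i=\cE_{\mu(\eta_{t,\I})}[Z^i]$.

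Condition \eqref{eq:Z-conditional} says that $\sum_{S\subseteq\I\setminus\{i\}}G^i_S=G^i_\emptyset$, so $G^i_S=0$ whenever $\emptyset\ne S\not\ni i$. Hence $P^i_t=\sum_{S\ni i}G^i_S$. Budget balance then gives, for each $S\neq\emptyset$, $\sum_{i\in S}G^i_S=\sum_{i\in\I}G^i_S=0$. The classical identity for the Shapley value of $H\mapsto\sum_{S\subseteq H}F_S$ is $\phi^i_t=\sum_{S\ni i}F_S/|S|$. Using orthogonality of distinct ANOVA components,
\[
\sum_i\E[\phi^i_tP^i_t]=\sum_{S\ne\emptyset}\frac{1}{|S|}\E\Bigl[F_S\sum_{i\in S}G^i_S\Bigr]=0.
\]
The delicate points are, first, verifying that the ANOVA projection identity $\cE_{\mu(\eta_{t,H})}[F]=\sum_{S\subseteq H}F_S$ matches the paper's kernels $\mu(\eta_{t,H})$, which includes the decision policy feeding into future states. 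This should follow from Lemma \ref{lem:LOTE} and independence, since only period-$t$ coordinates are being integrated at this step. Second, \eqref{eq:Z-conditional} holds only almost surely, which suffices inside expectations.

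Combining Steps 1--3 gives $\sum_i\mathrm{Cov}(\overline{U}^i,Z^i)=0$. Hence $\sum_i\Var(U'^i)=\sum_i\Var(\overline{U}^i)+\sum_i\Var(Z^i)\ge\sum_i\Var(\overline{U}^i)$. Since Proposition \ref{prop:modified sym-GUM is GUE} shows that modified sym-GUM is itself a GUE-implementation, this proves minimality. As a by-product, any minimizer must have $\Var(Z^i)=0$ for all $i$, i.e. $Z^i$ is a.s. a constant, and these constants sum to zero.
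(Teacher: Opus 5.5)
Your proof is correct but takes a different route from the paper's in the private-report step. Steps 1 and 2 agree in substance with the paper. There, the constant term is removed by first normalizing $\E[Y'^i]=0$, and the public term $\sum_i A_tZ^i/N$ vanishes pointwise by \eqref{eq:Z-balance}, with no need for the increments $D^i_t$.

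The paper never decomposes $Z^i$ or $\Upsilon$. It uses \eqref{eq:Z-balance} to rewrite $\sum_{i\in H}\E[\Upsilon^\I_{t,H}Z^i]$ as $-\sum_{j\notin H}\E[\Upsilon^\I_{t,H}Z^j]$. It then notes that $\Upsilon^\I_{t,H}$ is determined by $h_{t,\I\setminus\{j\}}$ for $j\notin H$, and $\Upsilon^\I_{t,H\setminus\{i\}}$ by $h_{t,\I\setminus\{i\}}$. So \eqref{eq:Z-conditional} and the tower property collapse every term to $\E[\Upsilon^\I_{t,\emptyset}m^j]$ with $m^j=\cE_{\mu(\eta_{t,\emptyset})}[Z^j]$, and $\sum_j m^j=0$ finishes. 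That route needs only Lemma \ref{lem:LOTE} and two measurability observations, and the same template is reused for Theorem \ref{mintransfervariance} with $\Upsilon^{\I\setminus H}$ in place of $\Upsilon^\I$.

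Your Hoeffding/Harsanyi-dividend argument needs more setup. You must identify the kernels $\cE_{\mu(\eta_{t,H})}$, acting on functions determined by $h_{t,\I}$, with product-measure projections; this is exactly the computation in Lemma \ref{lem:LOTE}(ii)c. You should also add the one-line reason why $\sum_{\emptyset\neq S\subseteq\I\setminus\{i\}}G^i_S=0$ forces each $G^i_S=0$: these components are mutually orthogonal.

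In return, your argument is more explanatory. It shows that the necessary conditions of Lemma \ref{lem:Z} say exactly, period by period, that a deviation has $G^i_S=0$ whenever $\emptyset\neq S\not\ni i$, and $\sum_{i\in S}G^i_S=0$ for every $S\neq\emptyset$. Meanwhile the Shapley value splits each interaction component $F_S$ equally among the members of $S$, so it is orthogonal to every such deviation. This makes transparent why the Shapley weights $\frac{1}{|H|\binom{N}{|H|}}$ appear in the minimizer, which the paper's computation verifies but does not explain.
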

\begin{proof}
    First normalize any competing GUE rule $y'$ by subtracting the constant $\E[Y'^i]$ from agent $i$'s period-$0$ transfer. Budget balance is preserved because $\sum_i\E[Y'^i]=0$; the GUE inequalities and all variances are unchanged. Hence we may assume
    \begin{equation}\label{eq:normalization}
    \E_{\mu(h_{0,\I})}[Y'^i]=\E_{\mu(h_{0,\I})}[{\overline{Y}^+}^i]=0
    \qquad\text{for every }i.
    \end{equation}
    Let
    \begin{align}
        Z^i=Y'^i-{\overline{Y}^+}^i.
    \end{align}
    Under truth-telling, the primitive payoff is the same under both transfer rules, so
    \begin{align}
        U^i(y'^i)=U^i({\overline{y}^+}^i)+Z^i
    \end{align}

    By \eqref{eq:normalization}, minimizing the sum of utility variances is equivalent to minimizing the sum of their second moments. It therefore suffices to prove the orthogonality identity
    \begin{equation}\label{eq:utility-orthogonality}
    \sum_{i\in\I}\E_{\mu(h_{0,\I})}[U^i({\overline{y}^+}^i)Z^i]=0.
    \end{equation}
    Afterwards, the Pythagorean theorem gives
    \begin{align}
        \sum_{i\in\I}\E_{\mu(h_{0,\I})}\lt[(U^i(y'^i))^2\rt]=\sum_{i\in\I}\E_{\mu(h_{0,\I})}\lt[(U^i({\overline{y}^+}^i))^2\rt]+\sum_{i\in\I}\E_{\mu(h_{0,\I})}\lt[(Z^i)^2\rt].
    \end{align}
    Telescoping the terms $\Upsilon_{t-1,\I}^i-\Upsilon_{t,\I}^i$ in \eqref{modifiedsymGUM} gives
    \begin{align}        U^i({\overline{y}^+}^i)=\Upsilon_{0,\I}^i+\sum_{t=1}^T\frac{\Upsilon_{t,\emptyset}^\I-\Upsilon_{t-1,\I}^\I}{N}+\sum_{t=1}^T\sum_{i\in H\subseteq\I}\frac{\Upsilon_{t,H}^\I-\Upsilon_{t,H\setminus\{i\}}^\I}{|H|\binom{N}{|H|}}.
    \end{align}
    The first term contributes zero to \eqref{eq:utility-orthogonality} because it is constant and $\E_{\mu(h_{0,\I})}[Z^i]=0$. The second term is common across agents; its contribution vanishes pointwise by \eqref{eq:Z-balance}. It remains to consider the final double sum. Fix a period $t$, write
    \begin{align}
        w_H=\frac{1}{|H|\binom{N}{|H|}},\qquad m^i=\cE_{\mu(\eta_{t,\emptyset})}\lt[Z^i\rt],
    \end{align}
    and suppress the period index on $\Upsilon$. The contribution of the positive terms is
\begin{align}
P&=
\sum_{i\in\I}\E_{\mu(h_{0,\I})}\lt[\sum_{i\in H\subseteq\I}w_H\Upsilon_H^\I Z^i\rt]=\sum_{\emptyset\neq H\subseteq\I}\E_{\mu(h_{0,\I})}\lt[\sum_{i\in H}w_H\Upsilon_H^\I Z^i\rt]\\
&=-\sum_{\emptyset\ne H\subseteq\I}w_H\sum_{j\notin H}\E_{\mu(h_{0,\I})}[\Upsilon_H^{\I}Z^j],
\end{align}
where the third equality uses \eqref{eq:Z-balance}. If $j\notin H$, then the value of $\Upsilon_H^\I(\td\te)$ is fully determined by $h_{t,\I\setminus\{j\}}(\td\te)$, and $m^j(\td\te)$ by $h_{t,\emptyset}(\td\te)$. By Lemma \ref{lem:Z} and \eqref{eq:Z-conditional},
\begin{align}
    &\E_{\mu(h_{0,\I})}\lt[\Upsilon_H^\I Z^j\rt]=\E_{\mu(h_{0,\I})}\lt[\cE_{\mu(\eta_{t,\I\setminus\{j\}})}\lt[\Upsilon_H^\I Z^j\rt]\rt]=\E_{\mu(h_{0,\I})}\lt[\Upsilon_H^\I\cE_{\mu(\eta_{t,\I\setminus\{j\}})}\lt[Z^j\rt]\rt]=\E_{\mu(h_{0,\I})}\lt[\Upsilon_H^\I m^j\rt]\\
    &=\E_{\mu(h_{0,\I})}\lt[\cE_{\mu(\eta_{t,\emptyset})}\lt[\Upsilon_H^\I m^j\rt]\rt]=\E_{\mu(h_{0,\I})}\lt[\cE_{\mu(\eta_{t,\emptyset})}\lt[\Upsilon_H^\I\rt]m^j\rt]=\E_{\mu(h_{0,\I})}\lt[\Upsilon_\emptyset^\I m^j\rt].
\end{align}
Likewise, the contribution of the negative terms is
\begin{align}
    Q&=-\sum_{i\in\I}\E_{\mu(h_{0,\I})}\lt[\sum_{i\in H\subseteq\I}w_H\Upsilon_{H\setminus\{i\}}^\I Z^i\rt]=-\sum_{\emptyset\neq H\subseteq\I}\E_{\mu(h_{0,\I})}\lt[\sum_{i\in H}w_H\Upsilon_{H\setminus\{i\}}^\I Z^i\rt]\\
    &=-\sum_{\emptyset\neq H\subseteq\I}w_H\sum_{i\in H}\E_{\mu(h_{0,\I})}\lt[\Upsilon_\emptyset^\I m^i\rt].
\end{align}
Consequently,
\begin{align}
    P+Q=-\sum_{\emptyset\neq H\subseteq\I}w_H\sum_{i\in\I}\E_{\mu(h_{0,\I})}\lt[\Upsilon_\emptyset^\I m^i\rt]=-\sum_{\emptyset\neq H\subseteq\I}w_H\E_{\mu(h_{0,\I})}\lt[\Upsilon_\emptyset^\I\sum_{i\in\I} m^i\rt]=0,
\end{align}
because $\sum_i m^i=\cE_{\mu(\eta_{t,\emptyset})}[\sum_iZ^i]=0$. Summing over $t$ proves \eqref{eq:utility-orthogonality}.
\end{proof}

\section{Transfer Variance Minimizing GUE Implementation of truthful reporting}

In this section, we introduce the\footnote{See Proposition \ref{prop:uniqueness} about uniqueness.} GUE-implementation of truthful reporting that minimizes the total variance of total transfers. \textbf{To be clear, here we are talking about minimizing the variance of total transfers, not per period transfers.}

\subsection{A GUM-like ordered transfer rule}
Fix the order $0,1,\dots,N$ of agents, where $0$ denotes the public signal. For distinct agents $i$ and $j$,
\begin{align}
    \gamma'^{i\to j}_t=\Upsilon_{t,[i]\setminus\{j\}}^j-\Upsilon_{t,[i-1]\setminus\{j\}}^j.
\end{align}
The ordered transfer rule is
\begin{align}\label{eq:GUM-like transfers}
    y_t^i=\sum_{j\neq i}\lt(\gamma'^{i\to j}_t-\gamma'^{j\to i}_t\rt),\qquad t>0,
\end{align}
with $y_0^i=0$. The terms $\Upsilon_{t,[j]\setminus\{i\}}^i-\Upsilon_{t,[j-1]\setminus\{i\}}^i$ telescope, giving the alternate form
\begin{align}\label{eq:GUM-like alternate formula}
    y_t^i=(\Upsilon_{t,\emptyset}^i-\Upsilon_{t,\I\setminus\{i\}}^i)+\sum_{j\neq i}\gamma'^{i\to j}_t,\qquad t>0.
\end{align}

The interpretation is the same as for GUM, except that when the effect of $i$'s report on $j$ is calculated, $j$'s current report is omitted from both conditioning histories.

This GUM-like transfer rule is analogously defined for other agent orders.

\begin{Proposition}\label{prop:GUM-like is GUE}
    The transfer rule \eqref{eq:GUM-like transfers} is budget-balanced and is a GUE-implementation of truthful reporting.
\end{Proposition}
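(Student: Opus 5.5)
Budget balance is immediate, since \eqref{eq:GUM-like transfers} is a sum of bilateral transfers: every term $\gamma'^{i\to j}_t$ enters agent $i$'s transfer with a plus sign and agent $j$'s with a minus sign, so $\sum_i y_t^i\equiv 0$ for each $t$. For the GUE property we will verify the pointwise hypothesis \eqref{eq:elegseges} of Proposition~\ref{elegseges}, as in Note~\ref{GUMisGUE}. Fix $i$ and $t>0$. By Note~\ref{note:total payoff}, under truthful reporting $U^i(y^i)=\Upsilon^i_{T,\I}+\sum_{\tau}y^i_\tau$ as a function of the report profile. Periods $\tau<t$ contribute terms that are determined by $h_{t,\emptyset}$, so they pass through both operators in \eqref{eq:elegseges} unchanged by Lemma~\ref{lem:LOTE}(i).

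For the remaining terms we use the alternate form \eqref{eq:GUM-like alternate formula}. The period-$t$ term $\Upsilon^i_{t,\emptyset}$ is $h_{t,\emptyset}$-measurable, so it also drops out. The term $-\Upsilon^i_{t,\I\setminus\{i\}}$ is $h_{t,\I\setminus\{i\}}$-measurable; its $\cE_{\mu(\eta_{t,\I\setminus\{i\}})}$ is itself, and its $\cE_{\mu(\eta_{t,\emptyset})}$ equals $\Upsilon^i_{t,\emptyset}$ by Lemma~\ref{lem:LOTE}(ii). For the primitive payoff plus the future transfers $\sum_{\tau>t}y^i_\tau$, we apply the tower property period by period. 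Under truthful continuation the future-period transfers telescope against $\Upsilon^i_{T,\I}$: each $\Upsilon^i_{\tau,\emptyset}-\Upsilon^i_{\tau,\I\setminus\{i\}}$ has zero conditional mean given $\eta_{\tau,\emptyset}$. So the $\cE_{\mu(\eta_{t,\I\setminus\{i\}})}$ of the future part is $\Upsilon^i_{t,\I\setminus\{i\}}$, and together with the $-\Upsilon^i_{t,\I\setminus\{i\}}$ term this gives the required cancellation.

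The key step, and the one we expect to be the main obstacle, is the received sum $\sum_{j\ne i}\gamma'^{i\to j}_t$. We must show that its conditional expectation given $\eta_{t,\I\setminus\{i\}}$ equals its conditional expectation given $\eta_{t,\emptyset}$, and analogously for every later period $\tau>t$; for $\tau>t$ this holds because such terms become $h_{\tau,\emptyset}$-measurable after one application of the tower property at period $\tau$, and their expectation given $\eta_{t,\cdot}$ is then just iterated. For period $t$ itself we argue term by term. If $j$ comes before $i$ in the order, then $\gamma'^{i\to j}_t$ depends on $i$'s report, and this is the delicate case. If $j$ comes after $i$, then $[i]\setminus\{j\}=[i]$ and $[i-1]\setminus\{j\}=[i-1]$, so the difference involves $i$'s report only through conditioning.

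The clean way to handle this is to compute $\cE_{\mu(\eta_{t,\I\setminus\{i\}})}$ of the whole sum $\sum_{j\ne i}(\gamma'^{i\to j}_t-\gamma'^{j\to i}_t)$ directly. The transfers $-\gamma'^{j\to i}_t$ paid by $i$ do not depend on $i$'s report, since $i$ is excluded from their conditioning sets, and they telescope to $\Upsilon^i_{t,\I\setminus\{i\}}-\Upsilon^i_{t,\emptyset}$. For the terms $\gamma'^{i\to j}_t$ received by $i$, we check that they are $h_{t,\I\setminus\{i\}}$-averaged to zero. This uses the independence \eqref{eq:product-transition} of $i$'s current type from the reports of $[i-1]\setminus\{j\}$, together with Lemma~\ref{lem:LOTE}(ii): averaging $\Upsilon^j_{t,[i]\setminus\{j\}}$ over $\widehat\theta^i_t$ yields $\Upsilon^j_{t,[i-1]\setminus\{j\}}$ pointwise, including off-path, because it is defined via the anticipated law. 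If this pointwise averaging goes through, then \eqref{eq:elegseges} holds pointwise and Proposition~\ref{elegseges} finishes the proof.
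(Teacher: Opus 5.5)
Your proposal is correct and is essentially the paper's own argument: you verify the pointwise condition \eqref{eq:elegseges} and apply Proposition~\ref{elegseges}. Transfers from periods $\tau<t$ pass through both operators, transfers from periods $\tau>t$ have zero conditional mean, and $\cE_{\mu(\eta_{t,\I\setminus\{i\}})}\bigl[\gamma'^{i\to j}_t\bigr]=0$ by Lemma~\ref{lem:LOTE}(ii) since $(\I\setminus\{i\})\cap([i]\setminus\{j\})=[i-1]\setminus\{j\}$, so both sides reduce to the past transfers plus $\Upsilon^i_{t,\emptyset}$.

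One justification should be tightened. For $\tau>t$ it is not enough that the received terms $\gamma'^{i\to j}_\tau$ become $h_{\tau,\emptyset}$-measurable after conditioning on $\eta_{\tau,\emptyset}$, since such a function can still depend on $i$'s period-$t$ report. What you need, and what Lemma~\ref{lem:LOTE}(ii) gives directly, is that this conditional mean equals $\Upsilon^j_{\tau,\emptyset}-\Upsilon^j_{\tau,\emptyset}=0$. Also, your case split on whether $j$ precedes $i$ is unnecessary, because in either case $i$'s report enters $\gamma'^{i\to j}_t$ only through the conditioning set $[i]\setminus\{j\}$.
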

\begin{proof}
    Budget-balance follows immediately from the bilateral interpretation. We then verify \eqref{eq:elegseges} with the tower property, so by proposition \ref{elegseges}, the transfer rule is a GUE-implementation of truthful reporting. Appendix \ref{app:supplementary for GUM-like is GUE} records the calculation in expanded form.
\end{proof}

\subsection{The symmetrized rule}
For safety, assume $N\geq 2$.
 
Averaging \eqref{eq:GUM-like transfers} over all orders gives
\begin{align}\label{eq:sym-GUM-like}
    \overline{y}_t^i=\lt(\Upsilon_{t,\emptyset}^i-\Upsilon_{t,\I\setminus\{i\}}^i\rt)+\sum_{i\in H\subsetneq\I}\frac{\Upsilon_{t,H}^{\I\setminus H}-\Upsilon_{t,H\setminus\{i\}}^{\I\setminus H}}{|H|\binom{N-1}{|H|}},\qquad t>0,
\end{align}
with $\overline{y}_0^i=0$. This formula is derived in Appendix \ref{app:sym-GUM like derivation}. This is a GUE-implementation of truthful reporting as it is a convex combination of GUE-implementations of truthful reporting. The tower property gives
\begin{align}
    \E_{\mu(h_{0,\I})}\lt[\overline{Y}^i\rt]=0\qquad\text{for every }i.
\end{align}

\begin{Theorem}[Transfer variance minimization]\label{mintransfervariance}
    Suppose $N\geq 2$. For the fixed decision policy, the transfer rule \eqref{eq:sym-GUM-like} minimizes the total variance of \textbf{total transfers} among all GUE implementations of truthful reporting. We define the total variance of total transfers as $\sum_{i\in\I}\Var\lt(Y^i(\td\te)\rt)$, where $\td\te\sim\mu(h_{0,\I})$.
\end{Theorem}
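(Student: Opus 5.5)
The plan is to copy the orthogonality argument of Theorem~\ref{minutilityvariance}, with the utilities $U^i$ replaced by the total transfers $Y^i$.

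\textbf{Step 1: normalization.} Take any competing GUE-implementation $y'$. Shift each agent's period-$0$ transfer by the constant $-\E_{\mu(h_{0,\I})}[Y'^i]$. This keeps budget balance, since these constants sum to zero, and it leaves the GUE inequalities and all variances unchanged. So we may assume $\E[Y'^i]=0=\E[\overline{Y}^i]$ for every $i$, and minimizing the sum of variances becomes minimizing the sum of second moments. Put $Z^i=Y'^i-\overline{Y}^i$. By Lemma~\ref{lem:Z}, $\sum_i Z^i\equiv 0$ and $\cE_{\mu(\eta_{t,\I\setminus\{i\}})}[Z^i]=\cE_{\mu(\eta_{t,\emptyset})}[Z^i]=:m_t^i$ almost surely, for every $t$. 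Budget balance of $Z$ also gives $\sum_i m_t^i=0$ pointwise.

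\textbf{Step 2: reduce to orthogonality.} It then suffices to prove
\[
\sum_{i\in\I}\E_{\mu(h_{0,\I})}\bigl[\overline{Y}^iZ^i\bigr]=0 .
\]
The Pythagorean identity $\sum_i\E[(Y'^i)^2]=\sum_i\E[(\overline{Y}^i)^2]+\sum_i\E[(Z^i)^2]$ then finishes the proof. Since $\overline{Y}^i=\sum_{t\ge1}\overline{y}_t^i$, it is enough to treat one period $t$ at a time, suppress $t$, and split $\overline{y}_t^i$ from \eqref{eq:sym-GUM-like} into two parts.

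\textbf{Step 3: the own-payoff part.} Consider $\Upsilon_{\emptyset}^i-\Upsilon_{\I\setminus\{i\}}^i$. Here $\Upsilon_{\I\setminus\{i\}}^i$ is measurable with respect to $h_{t,\I\setminus\{i\}}$. Condition on $\eta_{t,\I\setminus\{i\}}$ using Lemma~\ref{lem:LOTE}, then on $\eta_{t,\emptyset}$, exactly as in the earlier proof. This gives
\[
\E\bigl[\Upsilon_{\I\setminus\{i\}}^iZ^i\bigr]=\E\bigl[\Upsilon_{\emptyset}^i m^i\bigr]=\E\bigl[\Upsilon_\emptyset^iZ^i\bigr],
\]
so this part contributes zero for each $i$ separately. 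This is the structural reason the GUM-like rule, and not GUM itself, is the right candidate here.

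\textbf{Step 4: the $H$-sum part.} Write $w'_H=1/(|H|\binom{N-1}{|H|})$ and handle positive and negative terms separately.

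For the positive terms $w'_H\Upsilon_H^{\I\setminus H}Z^i$ with $i\in H\subsetneq\I$, reorder the sum by $H$. Then use $\sum_{i\in H}Z^i=-\sum_{j\notin H}Z^j$; here $H\neq\I$ guarantees that some $j\notin H$ exists. For $j\notin H$, the function $\Upsilon_H^{\I\setminus H}$ is determined by $h_{t,\I\setminus\{j\}}$. The same two-step conditioning then gives the contribution $P=-\sum_H w'_H\sum_{j\notin H}\E[\Upsilon_\emptyset^{\I\setminus H}m^j]$.

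For the negative terms $-w'_H\Upsilon_{H\setminus\{i\}}^{\I\setminus H}Z^i$ with $i\in H$, the conditioning set $H\setminus\{i\}$ already omits $i$. So conditioning directly gives $Q=-\sum_H w'_H\sum_{i\in H}\E[\Upsilon_\emptyset^{\I\setminus H}m^i]$.

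Adding,
\[
P+Q=-\sum_{\emptyset\neq H\subsetneq\I}w'_H\,\E\Bigl[\Upsilon_\emptyset^{\I\setminus H}\sum_{k\in\I}m^k\Bigr]=0 .
\]

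The main obstacle is the bookkeeping in Step 4. One must check that every conditioning move is legitimate: each $\Upsilon$ factor must be measurable with respect to the correct partial history, and the almost-sure identities from Lemma~\ref{lem:Z} must be applied only under expectations with respect to $\mu(h_{0,\I})$. One must also confirm that the weights $w'_H$ attached to the positive and negative terms match, so that the two sums combine into a single multiple of $\sum_k m^k$.
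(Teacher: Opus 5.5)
Your proposal is correct and follows the paper's proof step for step. You normalize to zero means and reduce to the orthogonality identity $\sum_i\E[\overline{Y}^iZ^i]=0$. You then kill the own-payoff difference $\Upsilon_{\emptyset}^i-\Upsilon_{\I\setminus\{i\}}^i$ by conditioning first on $\eta_{t,\I\setminus\{i\}}$ and then on $\eta_{t,\emptyset}$, and finally use budget balance of $Z$ to merge the positive and negative $H$-terms into a multiple of $\sum_k m^k=0$. The only cosmetic difference is that you reduce each of the two own-payoff terms separately to $\E[\Upsilon_\emptyset^i m^i]$, whereas the paper conditions their difference at once; this is the same computation.
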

\begin{proof}
Normalize an arbitrary competing truthful GUE rule $y'$ so that $\E_{\mu(h_{0,\I})}\lt[Y'^i\rt]=0$ for every $i$, exactly as in the proof of Theorem \ref{minutilityvariance}. Let
\begin{align}
    Z^i=Y'^i-\overline{Y}^i.
\end{align}
It is enough to prove
\begin{align}\label{eq:orthogonality-transfers}
    \sum_{i\in\I}\E_{\mu(h_{0,\I})}\lt[\overline{Y}^i Z^i\rt]=0,
\end{align}
because then
\begin{align}
    \sum_{i\in\I}\E_{\mu(h_{0,\I})}\lt[(Y'^i)^2\rt]=\sum_{i\in\I}\E_{\mu(h_{0,\I})}\lt[(\overline{Y}^i)^2\rt]+\sum_{i\in\I}\E_{\mu(h_{0,\I})}\lt[(Z^i)^2\rt].
\end{align}

Fix $t$ and write
\begin{align}
    v_H=\frac{1}{|H|\binom{N-1}{|H|}},\qquad m^i=\cE_{\mu(\eta_{t,\emptyset})}\lt[Z^i\rt],
\end{align}
and suppress the period index on $\Upsilon$. First consider the initial difference in \eqref{eq:sym-GUM-like}. Since both $\Upsilon_{\emptyset}^i(\td\te)$ and $\Upsilon_{\I\setminus\{i\}}^i(\td\te)$ are determined by $h_{t,\I\setminus\{i\}}(\td\te)$, Lemma \ref{lem:Z} and the tower property give
\begin{align}
    &\E_{\mu(h_{0,\I})}\lt[(\Upsilon_{\emptyset}^i-\Upsilon_{\I\setminus\{i\}}^i)Z^i\rt]=\E_{\mu(h_{0,\I})}\lt[\cE_{\mu(\eta_{t,\I\setminus\{i\}})}\lt[(\Upsilon_{\emptyset}^i-\Upsilon_{\I\setminus\{i\}}^i)Z^i\rt]\rt]\\
    &=\E_{\mu(h_{0,\I})}\lt[(\Upsilon_{\emptyset}^i-\Upsilon_{\I\setminus\{i\}}^i)\cE_{\mu(\eta_{t,\I\setminus\{i\}})}\lt[Z^i\rt]\rt]=\E_{\mu(h_{0,\I})}\lt[(\Upsilon_{\emptyset}^i-\Upsilon_{\I\setminus\{i\}}^i)m^i\rt]\\
    &=\E_{\mu(h_{0,\I})}\lt[\cE_{\mu(\eta_{t,\emptyset})}\lt[(\Upsilon_{\emptyset}^i-\Upsilon_{\I\setminus\{i\}}^i)m^i\rt]\rt]=\E_{\mu(h_{0,\I})}\lt[\cE_{\mu(\eta_{t,\emptyset})}\lt[(\Upsilon_{\emptyset}^i-\Upsilon_{\I\setminus\{i\}}^i)\rt]m^i\rt]=\E_{\mu(h_{0,\I})}\lt[0m^i\rt],
\end{align}
so this term contributes $0$ to \eqref{eq:orthogonality-transfers}. The contribution of the positive term of the remaining part is
\begin{align}
    P&=\sum_{i\in\I}\E_{\mu(h_{0,\I})}\lt[\sum_{i\in H\subsetneq\I}v_H\Upsilon_H^{\I\setminus H}Z^i\rt]=\sum_{\emptyset\neq H\subsetneq\I}\E_{\mu(h_{0,\I})}\lt[\sum_{i\in H}v_H\Upsilon_H^{\I\setminus H}Z^i\rt]\\
    &=-\sum_{\emptyset\neq H\subsetneq\I}v_H\sum_{j\notin H}\E_{\mu(h_{0,\I})}\lt[\Upsilon_{H}^{\I\setminus H} Z^j\rt]=-\sum_{\emptyset\neq H\subsetneq\I}v_H\sum_{j\notin H}\E_{\mu(h_{0,\I})}\lt[\Upsilon_{\emptyset}^{\I\setminus H}m^j\rt].
\end{align}
The third equality uses \eqref{eq:Z-balance} and the last uses \eqref{eq:Z-conditional} and the tower property. Similarly, the contribution of the negative term is
\begin{align}
    Q&=-\sum_{i\in\I}\E_{\mu(h_{0,\I})}\lt[\sum_{i\in H\subsetneq\I}v_H\Upsilon_{H\setminus\{i\}}^{\I\setminus H} Z^i\rt]=-\sum_{\emptyset\neq H\subsetneq\I}\E_{\mu(h_{0,\I})}\lt[\sum_{i\in H}v_H\Upsilon_{H\setminus\{i\}}^{\I\setminus H} Z^i\rt]\\
    &=-\sum_{\emptyset\neq H\subsetneq\I}v_H\sum_{i\in H}\E_{\mu(h_{0,\I})}\lt[\Upsilon_{\emptyset}^{\I\setminus H}m^i\rt].
\end{align}
Thus
\begin{align}
    P+Q=-\sum_{\emptyset\neq H\subsetneq\I}v_H\E_{\mu(h_{0,\I})}\lt[\Upsilon_{\emptyset}^{\I\setminus H}\sum_{i\in\I}m^i\rt]=0.
\end{align}
Summing over $t$ proves \eqref{eq:orthogonality-transfers}.
\end{proof}

\section{Uniqueness}
The inner product space structure of the two variance minimizer proofs immediately gives the following uniqueness result

\begin{Proposition}[Uniqueness of the truthful total transfers]\label{prop:uniqueness} After imposing the normalization $\E_{\mu(h_{0,\I})}\lt[Y^i\rt]=0$ for every agent, the minimizer in either Theorem \ref{minutilityvariance} or Theorem \ref{mintransfervariance} is unique in total transfers under truthful reporting, up to events of probability zero. Without normalization, another minimizer may differ by agent-specific constants $c^i$ satisfying $\sum_{i}c^i=0$.
\end{Proposition}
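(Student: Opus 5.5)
The plan is to reuse the Pythagorean identities already established in the proofs of Theorem \ref{minutilityvariance} and Theorem \ref{mintransfervariance}. Let $y'$ be any GUE-implementation of truthful reporting that attains the minimum in one of the two problems. First I reduce to the normalized case. Set $c^i=\E_{\mu(h_{0,\I})}[Y'^i]$. Budget balance gives $\sum_i c^i=\E_{\mu(h_{0,\I})}[\sum_iY'^i]=0$. Subtracting $c^i$ from agent $i$'s period-$0$ transfer yields a rule $y''$ that is still budget-balanced and still a GUE-implementation, since the inequality \eqref{GUE property} shifts by the same constant on both sides. Both objectives are variances, so this shift leaves them unchanged, and $y''$ is again a minimizer with $\E[Y''^i]=0$. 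It therefore suffices to show that a normalized minimizer agrees with the candidate almost surely. Adding the constants back then gives the unnormalized statement.

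Now take a normalized minimizer $y'$, and write $\overline{Y}^{\,c}$ for the total transfers of the relevant candidate: \eqref{modifiedsymGUM} in the utility problem, \eqref{eq:sym-GUM-like} in the transfer problem. Put $Z^i=Y'^i-\overline{Y}^{\,c,i}$. Both rules are normalized, so each variance equals the corresponding second moment. The two proofs established the orthogonality identities \eqref{eq:utility-orthogonality} and \eqref{eq:orthogonality-transfers}, using only Lemma \ref{lem:Z}. They therefore hold for this $Z$ and give
\[
\sum_i\Var(\text{objective under }y')=\sum_i\Var(\text{objective under candidate})+\sum_i\E_{\mu(h_{0,\I})}\lt[(Z^i)^2\rt].
\]
In the utility case the relation used is $U^i(y'^i)=U^i({\overline{y}^+}^i)+Z^i$, which holds because the primitive payoffs coincide under truthful reporting. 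Since $y'$ is also a minimizer, the two totals are equal, so $\sum_i\E[(Z^i)^2]=0$. Hence each $Z^i=0$ $\mu(h_{0,\I})$-almost surely, that is, $Y'^i=\overline{Y}^{\,c,i}$ outside a null event. This proves uniqueness of the truthful total transfers under the normalization.

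For the unnormalized statement, let $y'$ be an arbitrary minimizer. Its normalization $y''$ satisfies $Y''^i=\overline{Y}^{\,c,i}$ almost surely, so $Y'^i=\overline{Y}^{\,c,i}+c^i$ almost surely with $\sum_ic^i=0$. Conversely, any such shift of the candidate is again a budget-balanced GUE-implementation with the same objective value, so these shifts are exactly the other minimizers.

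The only point that needs care is checking that the orthogonality identities really apply to an arbitrary competitor. They do, because they depend only on \eqref{eq:Z-balance} and \eqref{eq:Z-conditional} from Lemma \ref{lem:Z}, and both follow from the GUE property via Proposition \ref{cor1}. Uniqueness holds only up to null events, since \eqref{eq:Z-conditional} is itself only an almost-sure statement. I also note that the claim concerns truthful total transfers, not per-period transfers or off-path behaviour.
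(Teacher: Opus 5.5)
Your proof is correct and follows essentially the same argument as the paper. You normalize by zero-sum constants, invoke the Pythagorean identity from the two theorems to get $\sum_i\E[(Z^i)^2]=0$ and hence $Z^i=0$ almost surely, and then undo the normalization. One small imprecision: in the utility problem the normalized utilities do not have mean zero (each has mean $\Upsilon_{0,\I}^i$), so "variance equals second moment" is not literally true there; what your displayed identity actually needs is that both rules give the same means $\E[U^i]$, which they do, so the second-moment identity carries over to variances.
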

\begin{proof}
    For any competing normalized transfer rule, write $Z^i$ for the difference in total transfers between it and the respective variance minimizer rule (\eqref{modifiedsymGUM} or \eqref{eq:sym-GUM-like}) for agent $i$. The competing transfer rule is optimal exactly when $\sum_i\E_{\mu(h_{0,\I})}\lt[(Z^i)^2\rt]=0$, which holds if and only if $\E_{\mu(h_{0,\I})}\lt[(Z^i)^2\rt]=0$ for every $i$, which in turn holds if and only if $Z^i=0$ $\mu(h_{0,\I})$-a.s.

    Any non-normalized minimizer changes under normalization precisely by zero-sum constants, while the variances it induces remain unchanged. Therefore, such a minimizer can differ from a normalized minimizer only by zero-sum constant transfers.
\end{proof}

\appendix

\section{Supplementary calculations}

\subsection{Proof of Lemma \ref{lem:LOTE}}\label{app:LOTE}
In this section, we (mostly) suppress the decision policy $\x$ and coordinated strategy $\overline{s}$ from the notation, but this does not mean that we only consider the case of truthful reporting.

\begin{proof}
\begin{enumerate}[label=(\roman*)]
    \item If $g(\td\te)$ is determined by $h_{t,H}(\td\te)$, then under $\mu(h_{t,H}(\te))$, $g(\td\te)=g(\te)$ surely, so by linearity
    \begin{align}
        \cE_{\mu(\eta_{t,H})}\lt[fg\rt](\te)=\cE_{\mu(\eta_{t,H})}\lt[f\rt](\te)g(\te).
    \end{align}
    \item We distinguish three cases.   
\end{enumerate}
\begin{enumerate}[label=\alph*]
        \item$\eta_{t_1,H_1}\supseteq\eta_{t_2,H_2}$: Substituting $"f"\equiv 1$ and $"g"=\cE_{\mu(\eta_{t_2,H_2})}\lt[f\rt]$ into part (i) of this lemma gives
        \begin{align}
            \cE_{\mu(\eta_{t_1,H_1})}\lt[\cE_{\mu(\eta_{t_2,H_2})}\lt[f\rt]\rt]=\cE_{\mu(\eta_{t_1,H_1})}\lt[1\rt]\cE_{\mu(\eta_{t_2,H_2})}\lt[f\rt].
        \end{align}
        The expectation of $1$ is, of course, $1$, and the inclusion relation of the two history kernels gives $\eta_{t_2,H_2}=\eta_{t_1,H_1}\cap\eta_{t_2,H_2}$, which completes this case.
        \item$\eta_{t_1,H_1}\subseteq\eta_{t_2,H_2}$: Fix a $\te\in(\Te)^{T}$. We need to prove
        \begin{align}\label{eq:left}
            \E_{\mu(h_{t_1,H_1}(\te))}\lt[f\rt]=\E_{\td\te\sim\mu(h_{t_1,H_1}(\te))}\lt[\E_{\mu(h_{t_2,H_2}(\td\te))}\lt[f\rt]\rt].
        \end{align}
        Because the future cannot affect the past, and because $\E_{\mu(h_{t_2,H_2}(\td\te))}\lt[f\rt]$ is a function of $h_{t_2,H_2}(\td\te)$,
        \begin{align}\label{eq:right}
            \E_{\td\te\sim\mu(h_{t_1,H_1}(\te))}\lt[\E_{\mu(h_{t_2,H_2}(\td\te))}\lt[f\rt]\rt]=\E_{h_{t_2,H_2}(\td\te)\sim\mu(h_{t_1,H_1}(\te))}\lt[\E_{\mu(h_{t_2,H_2}(\td\te))}\lt[f\rt]\rt].
        \end{align}
        Now notice that the right side of \eqref{eq:right} is the same as the left side of \eqref{eq:left}, just rewritten as a double sum.
        \item Neither inclusion relation holds between $\eta_{t_1,H_1}$ and $\eta_{t_2,H_2}$: The only way this can occur is if $t_1=t_2\eqqcolon t>0$. Consequently, $\eta_{t,H_1}\cap\eta_{t,H_2}=\eta_{t,H_1\cap H_2}$. Write $g(h_{t,H_2}(\te^*))=\E_{\mu(h_{t,H_2}(\te^*))}\lt[f\rt]$ and $\nu^i(\te^*)=\mu^i({\te^*}_{t-1}^0,{\te^*}_{t-1}^i,\x(h_{t-1}(\h{\te^*})))$. Fix a $\te\in(\Te)^{T}$. Expanding the definition of the continuation law expectation operator, and considering that future events do not affect the distribution of current types nor the value of $g$ gives
        \begin{align}\label{eq:first expansion}
            \cE_{\mu(\eta_{t,H_1})}\lt[g(h_{t,H_2})(\td\te)\rt](\te)=\sum_{\td\te_t^{\I\setminus H_1}}\lt(\prod_{i\in\I\setminus H_1}\nu^i(\te)(\td\te_t^i)\rt)g(h_{t,H_1\cap H_2}(\te),\td\te_t^{H_2\setminus H_1})
        \end{align}
        and
        \begin{align}\label{eq:second expansion}
            \cE_{\mu(\eta_{t,H_1\cap H_2})}\lt[g(h_{t,H_2})(\td\te)\rt](\te)=\sum_{\td\te_t^{\I\setminus (H_1\cap H_2)}}\lt(\prod_{i\in\I\setminus (H_1\cap H_2)}\nu^i(\te)(\td\te_t^i)\rt)g(h_{t,H_1\cap H_2}(\te),\td\te_t^{H_2\setminus H_1}).
        \end{align}
        Conditional independence of the agent's current types lets us rearrange the sum in \eqref{eq:second expansion} into a double sum, then we can simplify:
        \begin{align}
            &\sum_{\td\te_t^{\I\setminus H_1}}\lt(\prod_{i\in\I\setminus H_1}\nu^i(\te)(\td\te_t^i)\rt)g(h_{t,H_1\cap H_2}(\te),\td\te_t^{H_2\setminus H_1})\sum_{\td\te_t^{H_1\setminus H_2}}\prod_{i\in H_1\setminus H_2}\nu^i(\te)(\td\te_t^i)\\
            &=\sum_{\td\te_t^{\I\setminus H_1}}\lt(\prod_{i\in\I\setminus H_1}\nu^i(\te)(\td\te_t^i)\rt)g(h_{t,H_1\cap H_2}(\te),\td\te_t^{H_2\setminus H_1})\cdot 1.
        \end{align}
        So \eqref{eq:first expansion} and \eqref{eq:second expansion} are equal. From this point, part (ii)/b completes the proof by
        \begin{align}
            \cE_{\mu(\eta_{t,H_1\cap H_2})}\lt[\cE_{\mu(\eta_{t,H_2})}\lt[f\rt]\rt]=\cE_{\mu(\eta_{t,H_1\cap H_2})}\lt[f\rt].
        \end{align}
    \end{enumerate}
\end{proof}

\subsection{Proof of Lemma \ref{utilitysum}}\label{app:utilitysum}
\begin{proof}
    Fix a (pure) coordinated strategy $\bar s\in\bar\S^\I$, and for an arbitrary $\te^*\in(\Te)^{T}$, let $\h{\te^*}$ denote the reports generated recursively from it by $\overline{s}$. Because the transfer rule is budget-balanced, the total utility of the agents equals their total payoff. Consequently, we need only show
    \begin{align}
        \sum_{i\in\I}\E_{\mu(h_{0,\I},\mathsf{x},\overline{s})}\lt[\sum_{t=0}^Tu^i(\td\te_t^i,\mathsf{x}(h_t(\h{\td\te})))\rt]\leq\sum_{i\in\I}\E_{\mu(h_{0,\I},\mathsf{x})}\lt[\sum_{t=0}^Tu^i(\td\te_t^i,\mathsf{x}(h_{t}(\td\te)))\rt].
    \end{align}
    Define the decision policy $\mathsf{x}'$ by
\[
\mathsf x'(h_t(\theta^*))
=
\mathsf x\bigl(h_t(\widehat{\theta^*})\bigr).
\]
Non-anticipativity of $\bar s$ makes $\mathsf x'$ a valid decision policy. It is clear that $\mu(h_{0,\I},\mathsf{x},\overline s)=\mu(h_{0,\I},\mathsf{x}')$, considering that for any state profile $\te\in(\Te)^{T}$ and period $t$, $\mathsf{x}'(h_{t}(\te))=\mathsf{x}(h_{t}(\h\te))$. Consequently,
    \begin{align}        
    \E_{\mu(h_{0,\I},\mathsf{x},\bar s)}\lt[\sum_{i\in\I}\sum_{t=0}^Tu^i(\td\te_t^i,\mathsf{x}(h_t(\h{\td\te})))\rt]=\E_{\mu(h_{0,\I},\mathsf{x}')}\lt[\sum_{i\in\I}\sum_{t=0}^Tu^i(\td\te_t^i,\mathsf{x}'(h_t(\td\te)))\rt].
    \end{align}
    By the choice of $\mathsf{x}$ shown in equation \eqref{thedecisionpolicy},
    \begin{align}
        \E_{\mu(h_{0,\I},\mathsf{x}')}\lt[\sum_{i\in\I}\sum_{t=0}^Tu^i(\td\te_t^i,\mathsf{x}'(h_t(\td\te)))\rt]\leq\E_{\mu(h_{0,\I},\mathsf{x})}\lt[\sum_{i\in\I}\sum_{t=0}^Tu^i(\td\te_t^i,\mathsf{x}(h_t(\td\te)))\rt].
    \end{align}
\end{proof}

\subsection{Deriving the formula of sym-GUM \eqref{sym-GUM formula}}\label{GUM to sym-GUM}
    In this section, $\overline y$ denotes the sym-GUM transfer rule \eqref{sym-GUM formula}.

    The formula of $\overline{y}_t^i$ ($t>0$) is a linear combination of the terms $\Upsilon_{t,\emptyset}^i$, $\Upsilon_{t,\I}^i$, and terms of the form $\Upsilon_{t,H}^\I-\Upsilon_{t,H\setminus\{i\}}^\I$, where $i\in H\subseteq\I$, because for every order of agents the GUM transfer \eqref{eq:GUM alterante formula} to agent $i$ in period $t>0$ is a linear combination of such terms, and sym-GUM is a linear combination of ordered GUM transfers.

    The coefficients of $\Upsilon_{t,\emptyset}^i$ and $\Upsilon_{t,\I}^i$ in the formula of $\overline{y}_t^i$ are $1$ and $-1$ respectively, as they have these coefficients in every ordered GUM transfer rule. For a fixed subset of agents $H\subseteq\I$ that contains $i$, the coefficient of the term $\Upsilon_{t,H}^\I-\Upsilon_{t,H\setminus\{i\}}^\I$ will be $\frac{1}{N!}$ times the number of orders of agents in which $i$ is the $|H|$-th agent, and the agents in $H\setminus\{i\}$ are exactly the ones to come before $i$ in the order. This comes out to be
    \begin{align}
        \frac{1}{N!}\cdot\lt((|H|-1)!\cdot(N-|H|)!\rt)=\frac{1}{|H|\binom{N}{|H|}}.
    \end{align}
    Putting it all together, we get that
    \begin{align}
        \overline y_t^i=\Upsilon_{t,\emptyset}^i-\Upsilon_{t,\I}^i+\sum_{i\in H\subseteq\I}\frac{\Upsilon_{t,H}^\I-\Upsilon_{t,H\setminus\{i\}}^\I}{|H|\binom{N}{|H|}},
    \end{align}
    where $t>0$. The transfers in period $0$ of course remain $0$.

    \subsection{Deriving the formula of the symmetrized GUM-like transfer rule \eqref{eq:sym-GUM-like}}\label{app:sym-GUM like derivation}
    In this section, $\overline{y}$ denotes the symmetrized version of the GUM-like transfer rule \eqref{eq:GUM-like transfers}.

    The formula of $\overline y_t^i$ ($t>0$) is a linear combination of the terms $\Upsilon_{t,\emptyset}^i$ and $\Upsilon_{t,\I\setminus\{i\}}^i$, and terms of the form $\Upsilon_{t,H}^j-\Upsilon_{t,H\setminus\{i\}}^j$, where $i\neq j\in\I$, $i\in H\subseteq\I$, and $j\notin H$, because for every order of agents, the GUM-like transfer rule \eqref{eq:GUM-like alternate formula} is a linear combination of such terms, and $\overline{y}$ is a linear combination of these ordered transfer rules.

    The coefficients of $\Upsilon_{t,\emptyset}^i$ and $\Upsilon_{t,\I\setminus\{i\}}^i$ in the formula of $\overline{y}_t^i$ are $1$ and $-1$ respectively, since they have these coefficients in the formula of the GUM-like transfers regardless of agent order. For a fixed agent $j\in\I\setminus\{i\}$ and a fixed subset of agents $H\subseteq\I$ that contains $i$ and does not contain $j$, the coefficient of the term $\Upsilon_{t,H}^j-\Upsilon_{t,H\setminus\{i\}}^j$ is $\frac{1}{N!}$ times the number of orders of agents where either the agents to come before $i$ in the order are exactly the ones in $H\setminus\{i\}$, or the agents to come before $i$ are exactly the ones in $(H\setminus\{i\})\cup\{j\}$. The coefficient is then
    \begin{align}
        \frac{1}{N!}\lt((|H|-1)!\cdot(N-|H|)!+|H|!\cdot(N-|H|-1)!\rt)=\frac{1}{|H|\binom{N-1}{|H|}}.
    \end{align}
    Putting it all together, we get that
    \begin{align}
        \overline y_t^i=&\Upsilon_{t,\emptyset}^i-\Upsilon_{t,\I\setminus\{i\}}^i+\sum_{i\neq j\in\I}\sum_{i\in H\subseteq\I\setminus\{j\}}\frac{\Upsilon_{t,H}^j-\Upsilon_{t,H\setminus\{i\}}^j}{|H|\binom{N-1}{|H|}}\\
        =&\Upsilon_{t,\emptyset}^i-\Upsilon_{t,\I\setminus\{i\}}^i+\sum_{i\in H\subsetneq\I}\sum_{j\in\I\setminus H}\frac{\Upsilon_{t,H}^j-\Upsilon_{t,H\setminus\{i\}}^j}{|H|\binom{N-1}{|H|}}\\
        =&\Upsilon_{t,\emptyset}^i-\Upsilon_{t,\I\setminus\{i\}}^i+\sum_{i\in H\subsetneq\I}\frac{\Upsilon_{t,H}^{\I\setminus H}-\Upsilon_{t,H\setminus\{i\}}^{\I\setminus H}}{|H|\binom{N-1}{|H|}},
    \end{align}
    where $t>0$. The transfers in period $0$ remain $0$.

\subsection{Supplementary calculations for Proposition \ref{prop:GUM-like is GUE}}\label{app:supplementary for GUM-like is GUE}
Let $y$ denote the GUM-like transfer rule \eqref{eq:GUM-like transfers}.

Fix an agent $i$ and a period $t>0$. For $\tau<t$,
\begin{align}
    \cE_{\mu(\eta_{t,\I\setminus\{i\}})}\lt[y_\tau^i\rt]=\cE_{\mu(\eta_{t,\emptyset})}\lt[y_\tau^i\rt]=y_\tau^i,
\end{align}
because $y_\tau^i(\td\te)$ is determined by the history $h_{t,\emptyset}(\td\te)$. For $\tau>t$, by the law of total expectation
\begin{align}
    \cE_{\mu(\eta_{t,\I\setminus\{i\}})}\lt[y_\tau^i\rt]=\lt(\Upsilon_{t,\I\setminus\{i\}}^i-\Upsilon_{t,\I\setminus\{i\}}^i\rt)+\sum_{j\neq i}\lt(\Upsilon_{t,\I\setminus\{i\}}^j-\Upsilon_{t,\I\setminus\{i\}}^j\rt)=0
\end{align}
and
\begin{align}
    \cE_{\mu(\eta_{t,\emptyset})}\lt[y_\tau^i\rt]=\lt(\Upsilon_{t,\emptyset}^i-\Upsilon_{t,\emptyset}^i\rt)+\sum_{i\neq j}\lt(\Upsilon_{t,\emptyset}^j-\Upsilon_{t,\emptyset}^j\rt)=0.
\end{align}
Finally, by the law of total expectation
\begin{align}
    \cE_{\mu(\eta_{t,\I\setminus\{i\}})}\lt[\Upsilon_{T,\I}^i+y_t^i\rt]=\Upsilon_{t,\I\setminus\{i\}}^i+\lt(\Upsilon_{t,\emptyset}^i-\Upsilon_{t,\I\setminus\{i\}}^i\rt)+\sum_{j\neq  i}\lt(\Upsilon_{t,[i-1]\setminus\{j\}}^j-\Upsilon_{t,[i-1]\setminus\{j\}}^j\rt)=\Upsilon_{t,\emptyset}^i
\end{align}
and
\begin{align}
    \cE_{\mu(\eta_{t,\emptyset})}\lt[\Upsilon_{T,\I}^i+y_t^i\rt]=\Upsilon_{t,\emptyset}^i+\lt(\Upsilon_{t,\emptyset}^i-\Upsilon_{t,\emptyset}^i\rt)+\sum_{j\neq i}\lt(\Upsilon_{t,\emptyset}^j-\Upsilon_{t,\emptyset}^j\rt)=\Upsilon_{t,\emptyset}^i.
\end{align}
So the budget-balanced GUM-like transfer rule \eqref{eq:GUM-like transfers} satisfies \eqref{eq:elegseges}, thus it is a GUE-implementation of truthful reporting.

\subsection{Supplementary calculations for Note \ref{GUMisGUE}}\label{app:GUMisGUEcalculation}
Let $y$ denote the GUM transfer rule \eqref{eq:GUM transfers}.

Fix an agent $i$ and a period $t>0$. For $\tau<t$,
\begin{align}
    \cE_{\mu(\eta_{t,\I\setminus\{i\}})}\lt[y_\tau^i\rt]=\cE_{\mu(\eta_{t,\emptyset})}\lt[y_\tau^i\rt]=y_\tau^i,
\end{align}
because $y_\tau^i(\td\te)$ is determined by the history $h_{t,\emptyset}(\td\te)$. For $\tau>t$, by the law of total expectation
\begin{align}
    \cE_{\mu(\eta_{t,\I\setminus\{i\}})}\lt[y_\tau^i\rt]=\lt(\Upsilon_{t,\I\setminus\{i\}}^i-\Upsilon_{t,\I\setminus\{i\}}^i\rt)+\lt(\Upsilon_{t,\I\setminus\{i\}}^\I-\Upsilon_{t,\I\setminus\{i\}}^\I\rt)=0
\end{align}
and
\begin{align}
    \cE_{\mu(\eta_{t,\emptyset})}\lt[y_\tau^i\rt]=\lt(\Upsilon_{t,\emptyset}^i-\Upsilon_{t,\emptyset}^i\rt)+\lt(\Upsilon_{t,\emptyset}^\I-\Upsilon_{t,\emptyset}^\I\rt)=0.
\end{align}
Finally, by the law of total expectation
\begin{align}
    \cE_{\mu(\eta_{t,\I\setminus\{i\}})}\lt[\Upsilon_{T,\I}^i+y_t^i\rt]=\Upsilon_{t,\I\setminus\{i\}}^i+\lt(\Upsilon_{t,\emptyset}^i-\Upsilon_{t,\I\setminus\{i\}}^i\rt)+\lt(\Upsilon_{t,[i-1]}^\I-\Upsilon_{t,[i-1]}^\I\rt)=\Upsilon_{t,\emptyset}^i
\end{align}
and
\begin{align}
    \cE_{\mu(\eta_{t,\emptyset})}\lt[\Upsilon_{T,\I}^i+y_t^i\rt]=\Upsilon_{t,\emptyset}^i+\lt(\Upsilon_{t,\emptyset}^i-\Upsilon_{t,\emptyset}^i\rt)+\lt(\Upsilon_{t,\emptyset}^\I-\Upsilon_{t,\emptyset}^\I\rt)=\Upsilon_{t,\emptyset}^i.
\end{align}
So the budget-balanced GUM transfer rule \eqref{eq:GUM transfers} satisfies \eqref{eq:elegseges}, thus it is a GUE-implementation of truthful reporting.

\vspace{-0.5cm}

\singlespacing
\bibliographystyle{chicago}
\bibliography{Literature}

\end{document}